\pgfplotsset{compat=1.18}
\theoremstyle{definition}
\numberwithin{equation}{section}
\newtheorem{thm}{Theorem}[section]
\newtheorem{corollary}[thm]{Corollary}
\newtheorem{definition}[thm]{Definition}
\newtheorem{lemma}[thm]{Lemma}
\newtheorem{remark}[thm]{\emph{Remark}}
\renewcommand{\Im}{\operatorname*{Im}}
\newcommand{\dee}{\ensuremath{\textrm{d}}}
\newcommand{\inty}[4]{\ensuremath{ \int_{#1}^{#2} \! #3 \, \dee#4 }}
\newcommand{\floor}[1]{ \left\lfloor #1 \right\rfloor }
\newcommand{\braket}[1]{\left\langle #1 \right\rangle}
\title{Ergodic properties of incommensurate twisted bilayer materials}
\author{Nathan J. Essner, Jeremiah Williams, and Alexander B. Watson}
\begin{document}

\newpage

\maketitle

\begin{abstract}
    We consider one-dimensional deterministic and random tight-binding Hamiltonians modeling electronic properties of twisted bilayer materials. When the twisted structure is incommensurate, we prove convergence of the density of states measure in the thermodynamic limit and Pastur's theorem on shift-invariance of the spectrum. 
    Our results extend those of Massatt et al. \cite{Massatt2017} and Canc\`es et al. \cite{Cances2017a} in allowing for randomness. 
    We provide numerical density of states computations for the operators we consider.
\end{abstract}



\section{Introduction} \label{sec:intro_0}

\subsection{Motivation}

The field of two-dimensional materials began in 2005 with the isolation and characterization of graphene, a single layer of carbon atoms \cite{2005NovoselovGeimMorozovJiangKatsnelsonGrigorievaDubonosFirsov}. Since then, many other two-dimensional materials have been isolated, such as hexagonal Boron Nitride (hBN), and the transition metal dichalcogenides (TMDCs) \cite{geim_grigorieva}. 

In recent years, attention has shifted to stackings of such materials with a relative twist. For general twist angles, such materials may be aperiodic at the atomic scale because of incommensurability of the layer Bravais lattices. However, for relatively small twist angles, many properties of such materials can be captured by effective continuum PDE models which are periodic over the lattice of interlayer disregistry oscillation known as the moir\'e pattern \cite{Bistritzer2011,CancesGarrigueGontier,Watson2022}. For this reason, such materials are known as moir\'e materials. 

Moir\'e materials have attracted huge attention since the observation of correlated insulating and superconducting electronic phases in twisted bilayer graphene twisted to the ``magic angle'' $\approx 1^\circ$ \cite{Cao2018,Cao2018a} in 2018. Recent years have seen many other quantum many-body electronic phases realized in moir\'e materials \cite{Kennes2021}.

\subsection{Results} 

The present work is concerned with fundamental atomic-scale models of moir\'e materials when the individual layer Bravais lattices are incommensurate so that the model has no exact periodic cell. It is well-known that tight-binding models of such materials have an ergodic structure which may be exploited for efficient numerical computation of material properties \cite{Massatt2017,Cances2017a}. The present work builds on these works as follows. 

First, starting from a standard one-dimensional ``coupled chain'' moir\'e material model (e.g. \cite{Cances2017a}), we derive a reduced model involving only one layer where the effect of interlayer hopping is captured by an effective hopping term whose coefficients are periodic with the period of the other layer. We establish convergence of the empirical density of states measure and Pastur's theorem on shift-invariance of the spectrum for this model using standard methods from the theory of ergodic Schr\"odinger operators \cite{AizenmanWarzel2015}. We also discuss how this model can be formally related to the almost Mathieu operator \cite{SIMON1982463,2017Avron,Avila2006}.

We then return to the full coupled chain model and prove the same results for this model by similar methods, although we emphasize that the ergodic structure of this model is more subtle than the standard case considered in \cite{AizenmanWarzel2015}. Although convergence of the empirical density of states measure was already proved in \cite{Massatt2017,Cances2017a} along similar lines, our proof emphasizing the connection with the standard theory may still be of interest. We finally introduce a disordered incommensurate coupled chain model by adding random onsite potential terms. Here we again prove convergence of the empirical density of states and Pastur's theorem directly using some basic results in ergodic theory.

We conclude with numerical computations of the density of states as a function of the interlayer twist parameter. In the deterministic (but incommensurate) case, these computations show a complicated fractal-like structure reminiscent of the Hofstadter butterfly \cite{PhysRevB.14.2239}. In the random case, these structures are somewhat smoothed out. Note that our results do not rely strongly on our models being one-dimensional. We expect, for example, our results to hold essentially verbatim for more realistic two-dimensional models of moir\'e materials such as twisted bilayer graphene at the cost of more complicated definitions and proofs.

\subsection{Structure of work}

We start by motivating and precisely defining the three models we consider in the present work in Section \ref{sec:TB_models}. We then present our main analytical results on the ergodic properties of these models in Section \ref{sec:TB_ergodic}, postponing the proof of ergodicity of the random coupled chain model to Appendix \ref{sec:proofs_random_iccm}. We present our numerical results in Section \ref{sec:numerics}. For the reader's convenience we discuss details of the numerical method (kernel polynomial method) in Appendix \ref{sec:KPM}.

\subsection{Related work}

As stated above, the ergodic structure of incommensurate bilayer materials was considered in \cite{Massatt2017,Cances2017a}; see also, for example, \cite{Carr2017,MassattCarrLuskinOrtner2018,2018CarrMassattTorrisiCazeauxLuskinKaxiras,Massatt2020,Cazeaux2020,Massatt2021,Wang2025,Quan_2025}, although these works do not consider the reduced chain model we consider here, Pastur's theorem, or the random case. Localization has been proved for a tight-binding model of bilayer graphene where one layer is strained with respect to the other, creating an effectively one-dimensional moir\'e pattern; see \cite{Timmel2020,Becker2025b}. It would be interesting to see whether these results can be extended to any of the models considered here. 

Other recent relevant mathematical works have rigorously established stability of the semimetallic phase of incommensurate twisted bilayer graphene via a renormalization group analysis \cite{jauslin2025incommensuratetwistedbilayergraphene}, studied the Dirac cones of twisted bilayer graphene at commensurate angles \cite{malinovitch2024twistedbilayergraphenecommensurate}, and proposed a higher-order KPM-like method for computing the density of states \cite{yi2025highorderregularizeddeltachebyshevmethod}. It was shown in \cite{2023CancesMeng} that continuum moir\'e-scale models can be realized as semiclassical expansions of atomic-scale models. Numerical comparison between such models with the model of \cite{Wang2025} was carried out in \cite{cances2025numericalcomputationdensitystates}.

\subsection*{Acknowledgements}

This work began as an independent study project at UMN by NJE during Summer and Fall 2023 supervised by ABW. ABW's research was supported in part by NSF grant DMS 2406981. ABW would like to thank Daniel Massatt, Paul Cazeaux, \'Eric Canc\`es, Mitchell Luskin, and Simon Becker for stimulating conversations.

\section{Incommensurate bilayer materials: 1D tight-binding models} \label{sec:TB_models}

\subsection{Model of incommensurate coupled chains} 

In this section, we recall the incommensurate coupled chain model, a 1D model which shares many of the features of tight-binding models of twisted bilayer graphene and other moir\'e materials.
The model describes an electron hopping along two 1D chains, one with lattice constant $1$, and another with lattice constant $1 - \theta$, where $0<\theta<1$ is irrational. Irrationality of $\theta$ implies that the system has no exact periodic cell, so we call the chains incommensurate.

The coupled chain Hilbert space $\ell^2(\mathbb{Z}) \oplus \ell^2(\mathbb{Z})$ consists of infinite vectors defined on the lattices
\begin{equation}
    \phi = \begin{pmatrix} \phi_1\\ \phi_2\end{pmatrix}, \qquad \phi_i = (\phi_{in})_{n\in \mathbb{Z}}, \qquad \phi_{in} \in \mathbb{C}, \label{eq:simple_1d_chain_hilbert_space}
\end{equation}
with the natural inner product
\begin{equation}
    \langle\phi\mid \psi\rangle = \sum_{n\in \mathbb{Z}} \overline{\phi_{1n}}\psi_{1n} + \overline{\phi_{2n}}\psi_{2n}.
    \label{eq:simple_1d_chain_inner_product}
\end{equation}
We define the coupled chain Hamiltonian as follows.

\begin{definition} \label{def:incommensurate_coupled_chain_hamiltonian}
    For each $b \in \mathbb{R}$, known as the \emph{interlayer shift}, let $H(b)$ be the self-adjoint operator
    \begin{equation}
        H(b)\psi = \begin{pmatrix}
        (H(b)\psi)_1\\ (H(b)\psi)_2 
        \end{pmatrix}, \qquad (H(b)\psi)_i = ((H(b)\psi)_{in})_{n\in \mathbb{Z}},
        \label{eq:simple_hamiltonian_incommensurate_chain}
    \end{equation}
    where
    \begin{equation} \label{eq:coupled_chain_H}
    \begin{split}
        ((H(b)\psi)_{1n} & = - (\Delta \psi_1)_n + \sum_{n' \in \mathbb{Z}} h(n-(1-\theta)n'-b)\psi_{2n'}\\
        ((H(b)\psi)_{2n} & = - (\Delta \psi_2)_n + \sum_{n' \in \mathbb{Z}} h((1-\theta)n+b-n')\psi_{1n'},
    \end{split}
    \end{equation}
    where $(\Delta \psi)_n := \psi_{n+1} - 2 \psi_n + \psi_{n-1}$ is the \emph{discrete Laplacian}, and $h$ is the \emph{interlayer hopping function}: an even ($h(-\eta) = h(\eta)$), real, smooth, exponentially-decaying function. Note that these assumptions guarantee $H(b)$ is self-adjoint.
\end{definition}
A physically realistic choice for the interlayer hopping function, which arises from the overlap of exponentially decaying Wannier orbitals \cite{kaxiras_joannopoulos_2019}, is
\begin{equation} \label{eq:inter_hop}
    h(\eta) := A e^{- B \sqrt{ \eta^2 + L^2 }},
\end{equation}
where $A, B, L > 0$ are constants. Here, $L > 0$ represents the interchain distance. With this choice, the hopping amplitude between site $n$ on chain $1$, and site $n'$ on chain $2$, depends only on the intersite distance $\sqrt{ (n - (1 - \theta) n' - b)^2 + L^2 }$. However, we do not expect the specific form of $h$ to modify the essential features of the model. We restrict to nearest-neighbor hopping (discrete Laplacian) for the intralayer hopping for simplicity.

For fixed $b$, the model \eqref{eq:coupled_chain_H}-\eqref{eq:inter_hop} describes an electron hopping along a chain of atoms with positions $\left\{ \left(n,\frac{L}{2}\right)^\top \right\}_{n \in \mathbb{Z}}$, which can hop to a second chain of atoms with positions $\left\{ \left( (1 - \theta) n + b,-\frac{L}{2}\right)^\top \right\}_{n \in \mathbb{Z}}$. Note that we fix the origin in the $x$ direction to coincide with the position of an atom in the top layer without loss of generality.

\subsection{Reduced model of incommensurate coupled chains} \label{sec:reduced}

We will also consider a simplified incommensurate coupled chain model which can be derived as follows. Suppose that we apply a uniform onsite potential $-\mu$ to the second layer of the model in Definition \ref{def:incommensurate_coupled_chain_hamiltonian}, so that we replace the second equation of \eqref{eq:coupled_chain_H} by
    \begin{equation} \label{eq:coupled_chain_H}
        (H(b)\psi)_{2n} = - (\Delta \psi_2)_n - \mu \psi_{2n} + \sum_{n' \in \mathbb{Z}} h((1-\theta)n+b-n')\psi_{1n'}.
    \end{equation}
Recalling that the spectrum of $-\Delta$ is $[0,4]$, we have that for $\mu \notin [0,4]$, $-\Delta-\mu$ is invertible. For such $\mu$, we can obtain via Schur complement the following effective tight-binding model acting on the single chain Hilbert space $\ell^2(\mathbb{Z})$, which we refer to as the \emph{reduced incommensurate coupled chain} model.
\begin{definition}
\label{def:reduced_single_coupled_chain_hamiltonian}
For each $b \in \mathbb{R}$, let $H_r(b)$ be the self-adjoint operator 
\begin{equation} \label{eq:reduced_single_chain}
    ( H_r(b)\psi )_n := - (\Delta \psi)_n - \sum_{n' \in \mathbb{Z}} h(n - (1 - \theta)n' - b) \sum_{n'' \in \mathbb{Z}} (- \Delta - \mu)^{-1}_{n' n''} \sum_{n''' \in \mathbb{Z}} h((1 - \theta) n'' + b - n''') \psi_{n'''},
\end{equation}
where $h$ is an even, smooth, exponentially-decaying function.
\end{definition}
The inverse of the discrete Laplacian can be written out explicitly using the Fourier transform. We do not need this formula, only the basic facts that the inverse is translation-invariant $(- \Delta - \mu)^{-1}_{n + m, n'+m} = (- \Delta - \mu)^{-1}_{n,n'}$ for all integers $m$, inversion-symmetric $(-\Delta-\mu)^{-1}_{-n,-n'} = (-\Delta-\mu)^{-1}_{n,n'}$ and that the inverse is local $| (- \Delta - \mu)^{-1}_{n n'} | \leq A e^{- B |n - n'|}$ for constants $A, B > 0$.
We consider Definition \ref{def:reduced_single_coupled_chain_hamiltonian} as a simplified version of Definition \ref{def:incommensurate_coupled_chain_hamiltonian}. Thus,  the discussion below Definition \ref{def:incommensurate_coupled_chain_hamiltonian} applies to the interlayer hopping function here as well.

The reduced coupled chain operator can be written more simply as
\begin{equation}
    ( H_r(b)\psi )_n := - (\Delta \psi)_n + \sum_{n' \in \mathbb{Z}} \tilde{h}(n-b,n'-b) \psi_{n'},
\end{equation}
where the effective hopping function is defined as
\begin{equation}
    \tilde{h}(x,y) := - \sum_{n' \in \mathbb{Z}} \sum_{n'' \in \mathbb{Z}} h(x - (1 - \theta)n') (- \Delta - \mu)^{-1}_{n' n''} h((1 - \theta) n'' - y).
\end{equation}
Using translation and inversion symmetry of $(-\Delta -\mu)^{-1}$, the effective hopping function satisfies the symmetries
\begin{equation}
    \begin{split}
        &\tilde{h}(x+(1-\theta)m,y+(1 - \theta)m) = h(x,y) \quad \forall m \in \mathbb{Z}, 
        \quad \tilde{h}(-x,-y) = \tilde{h}(x,y).
    \end{split}
\end{equation}
Using exponential decay of $(-\Delta - \mu)^{-1}$ and of $h$ we have the decay estimate
\begin{equation}
    \left| \tilde{h}(x,y) \right| \leq A e^{- B |x - y|}
\end{equation}
for constants $A, B > 0$ (not necessarily the same as above). 

In Section \ref{sec:numerics} we will also consider the further simplified model where the inverse Laplacian is replaced by identity
\begin{equation} \label{eq:reduced_single_chain_noLap}
    ( H_{rc}(b)\psi )_n := - (\Delta \psi)_n + \mu^{-1} \sum_{n' \in \mathbb{Z}} h(n - (1 - \theta)n' - b) \sum_{n'' \in \mathbb{Z}} h((1 - \theta) n' + b - n'') \psi_{n''}.
\end{equation}
We can think of this operator as approximating \eqref{eq:reduced_single_chain} in the large $\mu$ limit where only the leading term in a Neumann series expansion of the inverse Laplacian is kept: $(- \Delta - \mu)^{-1} = \mu^{-1} \left(- \mu^{-1} \Delta - 1 \right)^{-1} \approx - \mu^{-1} + O(\mu^{-2})$.

\subsubsection{Connection to almost Mathieu operator}

It is interesting to note that if we truncate $\tilde{h}$ to the onsite term, we obtain
\begin{equation}
    ( H_r(b)\psi )_n \approx - (\Delta \psi)_n + V(n-b) \psi_{n},
\end{equation}
where
\begin{equation}
    V(x) := - \sum_{n' \in \mathbb{Z}} \sum_{n'' \in \mathbb{Z}} h(x - (1 - \theta)n') (- \Delta - \mu)^{-1}_{n' n''} h((1 - \theta) n'' - x)
\end{equation}
satisfies $V(x + (1 - \theta) m) = V(x)$ $\forall m \in \mathbb{Z}$ and $V(-x) = V(x)$.

If we write $V$ as a Fourier series
\begin{equation}
    V(x) = \sum_{n \in \mathbb{Z}} e^{i \frac{2 \pi n}{1 - \theta} x} \hat{V}(n), \quad \hat{V}(n) := \frac{1}{|1 - \theta|} \inty{0}{1 - \theta}{ e^{- i \frac{2 \pi n}{1 - \theta} x} V(x) }{x},
\end{equation}
and then truncate the Fourier series to the first non-zero frequencies, ignoring any constant term (since it makes no difference to the spectral properties), we arrive at a form of the almost Mathieu operator \cite{AizenmanWarzel2015}
\begin{equation} \label{eq:alm_M}
    ( H_r(b)\psi )_n \approx - (\Delta \psi)_n + 2 \lambda \cos\left( 2 \pi \alpha n - \tilde{b} \right) \psi_{n},
\end{equation}
where
\begin{equation}
    \lambda := \hat{V}(1) = \frac{1}{|1 - \theta|} \inty{0}{1 - \theta}{ e^{- i \frac{2 \pi}{1 - \theta} x} V(x) }{x}, \quad \alpha := \frac{1}{1 - \theta}, \quad \tilde{b} := \frac{2 \pi b}{1 - \theta},
\end{equation}
and we have used even-ness of $V$ to conclude that $\hat{V}(-1) = \hat{V}(1)$.

\subsection{Model of incommensurate coupled chains with random onsite potentials}

In this section, we introduce randomness to the incommensurate coupled chain model to simulate disorder effects. Specifically, we will introduce a potential sampled independently and identically distributed (i.i.d.) on each site of the incommensurate coupled chain operator from a given probability distribution on $\mathbb{R}$. We use the same Hilbert space as for the ordinary incommensurate coupled chain operator in Definition \ref{def:incommensurate_coupled_chain_hamiltonian}.

Let $(\mathbb{R},\mathcal{A}_0,\mathbb{P}_0)$ be a probability space, so that $\mathcal{A}_0$ is a $\sigma$-algebra of subsets of $\mathbb{R}$ and $\mathbb{P}_0$ is a measure on $\mathcal{A}_0$ such that $\mathbb{P}_0(\mathbb{R}) = 1$. Recall that the Cartesian product $\mathbb{R}^\mathbb{Z}$ consists of real sequences $(...,x_{-1},x_0,x_1,...)$ where $x_j \in \mathbb{R}$ for all $j \in \mathbb{R}$. We define the product $\sigma$-algebra $\mathcal{A}$ in the usual way as the $\sigma$-algebra generated by the rectangles $\bigtimes_{j \in \mathbb{Z}} A_j$, where $A_j \in \mathcal{A}_0$ for all $j \in \mathbb{Z}$ and $A_j = \mathbb{R}$ for all but finitely-many $j \in \mathbb{Z}$. From \cite{halmos2013measure} (Theorem B of Section 38), we have that there exists a unique probability measure $\mathbb{P}$ on $\mathcal{A}$ known as the product measure such that for any rectangle $E \in \mathcal{A}$, $\mathbb{P}(E)$ is simply the product of the $\mathbb{P}_0$-measures of each of the (finitely-many) sets in the product $E$ not equal to $\mathbb{R}$.

\begin{definition}
    \label{def:random_iccm_operator}
    Consider the Hilbert space of two incommensurate chains of lattice constants $1$ and $1-\theta$, where $0<\theta<1$ is irrational, and the probability space $(\mathbb{R}, \mathcal{A}_0, \mathbb{P}_0)$ with a compactly-supported Borel probability measure $\mathbb{P}_0$. Let $\omega_i = ( \omega_{ij} )_{j\in \mathbb{Z}}$, $i \in \{1,2\}$ be vectors in the probability space $(\mathbb{R}^\mathbb{Z}, \mathcal{A}, \mathbb{P})$ where $\mathbb{P}$ is the product measure over $(\mathbb{R},\mathcal{A}_0, \mathbb{P}_0)$.

    For each $b \in \mathbb{R}$, let $H_R(b,\omega_1,\omega_2)$ be the self-adjoint operator
    \begin{align}
        H_R(b,\omega_1,\omega_2)\psi = \begin{pmatrix}
            (H_R(b,\omega_1,\omega_2)\psi)_1 \\ (H_R(b,\omega_1,\omega_2)\psi)_2
        \end{pmatrix}, \qquad (H_R(b,\omega_1,\omega_2)\psi)_i = ((H_R(b,\omega_1,\omega_2)\psi)_{in})_{n \in \mathbb{Z}},
    \end{align}
    called the \emph{random incommensurate coupled chain operator}, where 
    \begin{equation} \label{eq:rand_coup_chain}
        \begin{split}
            (H_R(b,\omega_1,\omega_2)\psi)_{1n} = - (\Delta \psi_1)_n + \omega_{1n} \psi_{1n} + \sum_{n' \in \mathbb{Z}}  h(n - (1-\theta)n' - b)\psi_{2n'} , \\
            (H_R(b,\omega_1,\omega_2)\psi)_{2n} = - (\Delta \psi_2)_n + \omega_{2n} \psi_{2n} + \sum_{n' \in \mathbb{Z}} h((1-\theta)n + b - n')\psi_{1n'},
        \end{split}
    \end{equation}
    and $h$ is an interlayer hopping function: an even, real, smooth, exponentially-decaying function. 
\end{definition}
We restrict attention to compactly-supported Borel measures for simplicity: in particular, compact support of the probability measure ensures that $H_R$ can be uniformly bounded over $(b,\omega_1,\omega_2)$. Note that the discussion of Section \ref{sec:reduced} applies equally to \eqref{eq:rand_coup_chain} and would allow for deriving an analogous effective model on a single chain with random coefficients. Even more simply, one could obtain such a model by adding an onsite disorder term $\omega_n \psi_n$ to the model \eqref{eq:reduced_single_chain} directly.

\section{Incommensurate bilayer materials: ergodic properties} \label{sec:TB_ergodic}


\subsection{Definition of an ergodic operator}


We start by briefly reviewing the definition of an ergodic operator acting on $\ell^2(\mathbb{Z})$ following \cite{AizenmanWarzel2015}. Recall that matrix elements of real-valued and measurable functions $f$ of self-adjoint operators $H$ acting in a Hilbert space $\mathcal{H}$ can be conveniently defined through the spectral measure (Appendix A of \cite{AizenmanWarzel2015})
\begin{align}
    \langle \varphi, f(H) \psi \rangle \vcentcolon = \int f(\lambda) \, \text{d} \mu_{\varphi,\psi},
\end{align}
where $\mu_{\varphi,\psi}$ is the spectral measure of $H$ associated to the vectors $\varphi, \psi \in \mathcal{H}$, and $\langle , \rangle$ denotes the $\mathcal{H}$-inner product \cite{AizenmanWarzel2015}. 
We then say that a function $\omega \mapsto H(\omega)$ mapping a probability space $(\Omega,\mathcal{A},\mathbb{P})$ to the set of self-adjoint operators on $\mathcal{H}$ is \emph{weakly measurable} if, for all $f\in L^\infty(\mathbb{R})$ and all $\varphi,\psi \in \mathcal{H}$, the functions $\omega \mapsto \langle \varphi, f(H(\omega))\psi\rangle$ are $\mathbb{P}$-measurable \cite{AizenmanWarzel2015}. 
\begin{definition}
    \label{def:random_operator}
    A self-adjoint \textit{random operator} is an operator-valued weakly measurable function defined on a probability space $(\Omega, \mathcal{A}, \mathbb{P})$ which assigns to every $\omega \in \Omega$ a self-adjoint operator $H(\omega)$ acting in some common (separable) Hilbert space $\mathcal{H}$.
\end{definition}
Recall that a transformation $T:X\to X$ on a measure space $(X,\mathcal{A}, \mu)$ is called \emph{measure-preserving} if for all $A\in \mathcal{A}$, $\mu(T^{-1}A) = \mu(A)$. 
\begin{definition}
    \label{def:ergodic_transformations_prob_space}
    The action of a group of measure-preserving transformations $(T_x)_{x\in I}$ on a probability space $(\Omega, \mathcal{A}, \mathbb{P})$ is \textit{ergodic} if all events $A\in \mathcal{A}$ which are invariant under the group, $T^{-1}_xA = A$ for all $x\in I$, are of probability zero or one.
\end{definition}

Recall that a \emph{graph} is a pair $\mathbb{G} = (V,E)$, where $V$ is the set of \emph{vertices}, and $E$ is a set of pairs of vertices, known as the set of \emph{edges}. A \emph{graph automorphism} is a permutation $\sigma$ of the graph's vertices such that, for all vertices $v_1, v_2 \in V$, if $(v_1,v_2) \in E$ then $(\sigma(v_1),\sigma(v_2)) \in E$. A graph $\mathbb{G}$ is called \emph{vertex-transitive} if, for any vertices $v_1, v_2 \in V$, there exists a graph automorphism $f : \mathbb{G} \rightarrow \mathbb{G}$ such that $f(v_1) = v_2$. Below, we will abuse notation to write $\mathbb{G}$ both for the graph and for the set of vertices of $\mathbb{G}$.

\begin{definition}
    \label{def:standard_ergodic_operator_random_0}
    An \textit{ergodic operator} is a random operator $H(\omega)$ such that
    \begin{enumerate}
        \item The operators act on $\mathcal{H} = \ell^2(\mathbb{G})$, where $\mathbb{G}$ is endowed with a group of vertex-transitive graph automorphisms $\mathcal{S} = (S_x)_{x\in I}$.
        \item The group $\mathcal{S}$ can be represented by a group of measure-preserving transformations $(T_x)_{x\in I}$ whose action on a probability space $(\Omega, \mathcal{A}, \mathbb{P})$ is ergodic. Moreover, for every $x\in I$ and $\omega \in \Omega$, 
        \begin{equation}
            H(T_x\omega) = U_{x,\omega} H(\omega) U^\dagger_{x,\omega} \label{eq:unitary_operator_equivlence_0}
        \end{equation}
        with a unitary operator $U_{x,\omega}$ for which all $\psi \in \ell^2(\mathbb{Z})$
        \begin{equation}
            (U_{x,\omega}\psi)(\xi) = \psi(S_x\xi)e^{i\phi_{x,\omega}(\xi)}
            \label{eq:unitary_cond_standard_ergodic_operator}
        \end{equation}
        having some real-valued phase $\phi_{x,\omega}(\xi)$.
    \end{enumerate}
\end{definition}

\begin{remark}
    \label{remark:ergodic_nomenclature_convention}
    The above definition of an ergodic operator is technically a ``standard ergodic operator'' in \cite{AizenmanWarzel2015}, as opposed to an ``ergodic operator'', which is a random operator without the additional structure of each lattice shift corresponding to each transformation of the probability space. We ignore this distinction here. 
\end{remark}

\subsection{Reduced incommensurate coupled chain model as an ergodic operator}

The following theorem asserts that the reduced incommensurate coupled chain model \eqref{eq:reduced_single_chain} is an ergodic operator in the sense of Definition \ref{def:standard_ergodic_operator_random_0}.
\begin{thm}
    \label{thm:ergodicity_simplified_single_chain}
    Consider the graph $\mathbb{Z}$ endowed with the group of graph automorphisms $\mathcal{S} = (S_x)_{x \in \mathbb{Z}}$ where $S_x$ translates the graph by $x$, i.e., $S_x n := n - x$. For each $x \in \mathbb{Z}$, let
    \begin{equation}
        T_x b = b - x \mod (1 - \theta), \quad ( U_x \psi )_n = \psi_{n - x}, \quad n \in \mathbb{Z}.
    \end{equation}
    Then, $(T_x)_{x \in \mathbb{Z}}$ is a group of measure-preserving transformations representing $\mathcal{S}$ whose action on the probability space $([0,1 - \theta),\mathcal{A},\mathbb{P})$, where $\mathcal{A}$ denotes the Borel sets and $\mathbb{P}$ denotes the normalized Lebesgue measure, is ergodic, and $(U_x)_{x \in \mathbb{Z}}$ is a group of unitary operators on $\ell^2(\mathbb{Z})$ such that
    \begin{equation} \label{eq:shift_shift}
        H_r(T_x b) = U_x H_r(b) U_x^\dagger.
    \end{equation}
    Thus, the reduced incommensurate coupled chain model is an ergodic operator in the sense of Definition \ref{def:standard_ergodic_operator_random_0}.
\end{thm}
\begin{proof}
    That $( T_x )_{x \in \mathbb{Z}}$ is a group of measure-preserving transformations acting ergodically for $\theta$ irrational is standard; see e.g. \cite{Nadkarni1998}. The identity \eqref{eq:shift_shift} can be verified by direct calculation.
\end{proof}

With ergodicity established, we can prove convergence of the density of states measure for the reduced incommensurate coupled chain model in the thermodynamic limit. The density of states measure is a key concept when studying electronic properties of materials. 

We start by recalling the \emph{Riesz representation theorem} (Theorem 2.14 of \cite{rudin1987real}). Recall that a linear functional $I$ on $C_c(\mathbb{R})$ (continuous functions on $\mathbb{R}$ with compact support) is called \emph{positive} if $I(f) \geq 0$ whenever $f \geq 0$. A Borel measure is called \emph{outer regular} on a set $E$, and \emph{inner regular} on $E$, respectively, if
\begin{equation}
    \mu(E) = \inf\{ \mu(U) : U \supset E, U \text{ open} \}, \quad \mu(E) = \sup\{ \mu(K) : K \subset E, K \text{ compact} \}.
\end{equation}
A Borel measure is called a \emph{Radon measure} if it is finite on compact sets, outer regular on Borel sets, and inner regular on all open sets. The Riesz representation theorem states that if $I$ is a positive linear functional on $C_c(\mathbb{R})$, there is a unique Radon measure $\mu$ on $\mathbb{R}$ such that
\begin{equation}
    I(f) = \inty{\mathbb{R}}{}{ f }{\mu}, \quad \forall f \in C_c(\mathbb{R}).
\end{equation}
In particular, Radon measures can be defined via their action on test functions, and vice versa.

\begin{definition} \label{def:finite_DOS_0}
    For positive integers $L$, let $\Lambda_L := [-L,L] \cap \mathbb{Z}$, so that $|\Lambda_L| = 2 L + 1$. Then, for each $b \in [0,1-\theta)$,
    define $H_{r,L}(b) = 1_{\Lambda_L} H_r(b) 1_{\Lambda_L}$, where $1_{\Lambda_L}$ is the characteristic function for the set $\Lambda_L$ and $H_r$ is as in \eqref{eq:reduced_single_chain}. We define the \emph{empirical density of states measure} as the unique Radon measure $\text{d}k_{b,L}$ corresponding to the linear functional 
    \begin{equation}
        g \mapsto \frac{1}{|\Lambda_L|} \sum_{x\in \Lambda_L} \langle \delta_x, g(H_{r,L}(b))\delta_x\rangle = \int_\mathbb{R} g\,\text{d}k_{b,L}
        \label{eq:finite_tdos_ergodic_operator}
    \end{equation}
    acting on $g \in C_c(\mathbb{R})$, where, for each $x \in \mathbb{Z}$, $\delta_x$ is the vector whose $x$th entry is $1$ and all others are $0$. 
\end{definition}
Let $E_{j,b,L}$, $j \in \{1,...,2L+1\}$ denote the eigenvalues of $H_{r,L}(b)$, then straightforward application of the spectral theorem for finite-dimensional Hermitian matrices shows that
\begin{equation}
    \text{d}k_{b,L} = \frac{1}{|\Lambda_L|} \sum_{j = 1}^{2 L + 1} \delta_{E_{j,b,L}},
\end{equation}
where $\delta_{E_{j,b,L}}$ denotes the Dirac measure for the point $E_{j,b,L}$.
The thermodynamic limit density of states is defined as follows.
\begin{definition}
    \label{def:density_of_states}
    The \emph{density of states measure} (DOSM) is the measure $\text{d}k$ defined by the linear functional 
    \begin{equation}
        \label{eq:density_of_states_measure}
        \int_\mathbb{R} g\,\text{d}k := \mathbb{E}(\langle \delta_0, g(H(\cdot)) \delta_0\rangle) = \inty{0}{1-\theta}{ \langle \delta_0, g(H(b)) \delta_0 \rangle }{b}, \quad \forall g \in C_c(\mathbb{R}).
    \end{equation}
\end{definition}

Convergence of the density of states measure in the thermodynamic limit now follows from ergodicity and Birkhoff's ergodic theorem \cite{Krengel1985}.
\begin{corollary}
    Let $\Lambda_L$ and $H_{r,L}$ be as in Definition \ref{def:finite_DOS_0}. Then, for all $b \in [0,1-\theta)$ outside of a measure $0$ set,
    \begin{equation} \label{eq:EDOS}
        \lim_{L \rightarrow \infty} \frac{1}{|\Lambda_L|} \sum_{x \in \Lambda_L} \langle \delta_x, g(H_{r,L}(b))\delta_x\rangle = \mathbb{E}( \langle \delta_0 , g(H_r(b)) \delta_0 \rangle ),
    \end{equation}
    where $\mathbb{E}$ denotes expectation with respect to the normalized Lebesgue measure, i.e.,
    \begin{equation} \label{eq:red_this}
        \mathbb{E}( \langle \delta_0 , g(H_r(b)) \delta_0 \rangle ) := \frac{1}{| 1 - \theta |} \inty{0}{1 - \theta}{ \langle \delta_0 , g(H_r(b)) \delta_0 \rangle }{b}.
    \end{equation}
In particular, we have weak convergence of the empirical density of states measures 
    \begin{equation}
        \lim_{L\to \infty}\int f \,\text{d}{k}_{\omega,L} = \int f \,\text{d}k.
    \end{equation}
\end{corollary}
\begin{proof} See the proof of Theorem 3.15 in \cite{AizenmanWarzel2015}. 
\end{proof}

Another standard consequence of ergodicity is the following, known as Pastur's theorem.
\begin{corollary}
    For each $\theta \in (0,1)$ irrational, there is a set $\Sigma \subset \mathbb{R}$ such that 
    \begin{align}
        \mathbb{P}\left(\{b\in [0,1-\theta): \sigma(H_r(b)) = \Sigma \}\right) = 1,
    \end{align} 
    called the \emph{almost-sure spectrum} of the reduced incommensurate coupled chain operators $H_r(b)$.
\end{corollary}
\begin{proof} See the proof of Theorem 3.10 in \cite{AizenmanWarzel2015}.
\end{proof}
Analogous results hold also for the Lebesgue decomposition of the spectrum into its absolutely continuous, singular continuous, and pure point parts.

\subsection{Convergence of the density of states measure for the incommensurate coupled chain model}
\label{sec:convergence_dosm_iccm}

In this section, we will prove convergence of the empirical density of states measure for the full incommensurate coupled chain model. The fact that the model has two chains makes the proof slightly more involved than the case of the reduced coupled chain model. In particular, we will not prove this model is ergodic in the sense of Definition \ref{def:standard_ergodic_operator_random_0} as an intermediate step. However, the proof follows from the same essential ideas.

We start by defining the density of states for the incommensurate coupled chain truncated to the interval $[-L,L]$ for $L > 0$ simultaneously for each chain, subject to Dirichlet boundary conditions.
\begin{definition} \label{def:truncated_dosm_iccm}
    For each fixed $b \in \mathbb{R}$, and each positive $L > 0$, let $\Lambda_L^1 := \mathbb{Z} \cap [-L,L]$ and $\Lambda_L^2 := \mathbb{Z} \cap \left[\frac{-L - b}{1 - \theta},\frac{L - b}{1 - \theta}\right]$. For $L > |b|$, we have
    \begin{equation}
        |\Lambda_L| := |\Lambda_L^1| + |\Lambda_L^2|, \text{ where } |\Lambda_L^1| = 1 + 2 \floor{L}, \quad |\Lambda_L^2| = 1 + \floor{ \frac{L + b}{1 - \theta} } + \floor{ \frac{L - b}{1 - \theta} }, \label{eq:size_of_truncations}
    \end{equation}
    where $\floor{ \cdot }$ denotes the floor function, i.e., the mapping of a positive number $\eta$ to the largest integer less than or equal to $\eta$. Let
    \begin{equation}
            1_{\Lambda_L} \psi := \begin{pmatrix} (1_{\Lambda_L} \psi)_1 \\ (1_{\Lambda_L} \psi)_2 \end{pmatrix}, \quad ( 1_{\Lambda_L} \psi )_i = \left( ( 1_{\Lambda_L} \psi )_{in} \right)_{n \in \mathbb{Z}}, \quad i \in \{1,2\},
    \end{equation}
    where
    \begin{equation}
                ( 1_{\Lambda_L} \psi )_{1n} = \begin{cases} \psi_{1n} & n \in \Lambda_L^1 \\ 0 & \text{else}, \end{cases} \quad ( 1_{\Lambda_L} \psi )_{2n} = \begin{cases} \psi_{2n} & n_2 \in \Lambda_L^2 \\ 0 & \text{else}. \end{cases}
    \end{equation}
    Then, let $H_L(b) := 1_{\Lambda_L} H(b) 1_{\Lambda_L}$. We define the density of states measure for $H_L(b)$ as the unique Radon measure $\text{d}k_{b,L}$ corresponding to the linear functional acting on $g \in C_c(\mathbb{R})$
    \begin{equation}
        g \mapsto \frac{1}{|\Lambda_L|} \left( \sum_{x \in \Lambda_L^1} \langle \delta_{1x} , g(H_L(b)) \delta_{1x}\rangle + \sum_{x \in \Lambda_L^2} \langle \delta_{2x} , g(H_L(b)) \delta_{2x}\rangle \right) = \int_\mathbb{R} g\,\text{d}k_{b,L},
        \label{eq:finite_dosm_truncated_incommensurate_chain_operator}
    \end{equation}
    where $\delta_{ix}$ is the vector with entry $1$ in the entry corresponding to site $x$ on the $i$th layer for each $x \in \mathbb{Z}$ and $i \in \{1,2\}$ and entries $0$ otherwise.
\end{definition}
Again, the measure \eqref{eq:finite_dosm_truncated_incommensurate_chain_operator} is simply a normalized sum of delta masses at the eigenvalues of the truncated operator $H_L(b)$.

We now introduce the ergodic structure which will allow us to prove convergence of the density of states measure \eqref{eq:finite_dosm_truncated_incommensurate_chain_operator} in the limit $L \rightarrow \infty$. We start with the following lemma, which follows immediately from the definitions.
\begin{lemma}
    \label{lemma:shifting_operators_incommensurate_operator_transformations}
    For each $x \in \mathbb{Z}$, define the operator $U^1_x$, which shifts in layer $1$ by $x$ while leaving layer $2$ unchanged, as 
    \begin{equation}
        U^1_{x} \psi = \begin{pmatrix} ( U^1_{x} \psi )_1 \\ ( U^1_{x} \psi )_2 \end{pmatrix}, \quad ( U^1_{x} \psi )_i = \left( ( U^1_{x} \psi )_{in} \right)_{n \in \mathbb{Z}}, \quad i \in \{1,2\},
    \end{equation}
    where
    \begin{equation}
        ( U^1_x \psi )_{1n} = \psi_{1(n-x)}, \quad ( U^1_x \psi )_{2n} = \psi_{2n}, \quad n \in \mathbb{Z}.
    \end{equation}
    Define operators $U^2_x$ which shift by $x$ in layer $2$ and leave layer $1$ unchanged analogously. Then, for each $x \in \mathbb{Z}$ and $b \in \mathbb{R}$, let 
    \begin{equation} \label{eq:T1T2}
        T^1_x b := b + x, \quad T^2_x b := b - (1-\theta) x. 
    \end{equation}
    Then, we have
    \begin{equation} \label{eq:layer_shifts}
        H( T^1_x b ) = U_x^1 H(b) ( U_x^1 )^\dagger, \quad H( T^2_x b ) = U_x^2 H(b) ( U_x^2 )^\dagger.
    \end{equation}
\end{lemma}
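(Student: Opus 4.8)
The plan is to verify the two operator identities in \eqref{eq:layer_shifts} directly, by computing the action of each side on an arbitrary $\psi \in \mathcal{H}$ and comparing the layer-$1$ and layer-$2$ components separately. The single structural feature driving everything is that the interlayer hopping in \eqref{eq:coupled_chain_H} depends on the indices and the shift only through the combination $n_1 - (1-\theta)n_2 - b$, so a translation of $b$ can be absorbed into a relabeling of one layer index. The intralayer nearest-neighbor terms $\psi_{n_i \pm 1}$ carry no $b$-dependence and are manifestly translation-covariant under the pure index shifts $U^1_x, U^2_x$, so they contribute nothing beyond bookkeeping.

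First I would treat $H(T^1_x b) = U^1_x H(b)(U^1_x)^\dagger$. Since $U^1_x$ is a unitary shift of the layer-$1$ index, its adjoint is the inverse shift, $(U^1_x)^\dagger = U^1_{-x}$. Setting $\phi := (U^1_x)^\dagger \psi$, so that $\phi_{n_1} = \psi_{n_1 + x}$ and $\phi_{n_2} = \psi_{n_2}$, I would compute $\chi := H(b)\phi$ componentwise and then apply $U^1_x$. The conceptual identity is $n_1 - (1-\theta)n_2 - (b+x) = (n_1 - x) - (1-\theta)n_2 - b$, which says that translating $b$ by $+x$ is the same as translating the layer-$1$ index by $x$. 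Concretely, on layer $2$ the reindexing of the interlayer sum $\sum_{n_1} h(n_1 - (1-\theta)n_2 - b)\phi_{n_1}$ produces the argument $m_1 - (1-\theta)n_2 - (b+x)$; on layer $1$ the outer application of $U^1_x$ shifts the kinetic and interlayer terms so that they too land at shift $b+x$. Since $T^1_x b = b + x$, both components agree with $H(T^1_x b)\psi$.

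The identity $H(T^2_x b) = U^2_x H(b)(U^2_x)^\dagger$ is handled identically, now shifting the layer-$2$ index. The governing identity is $n_1 - (1-\theta)n_2 - (b - (1-\theta)x) = n_1 - (1-\theta)(n_2 - x) - b$, so translating $b$ by $-(1-\theta)x$ is equivalent to translating the layer-$2$ index by $x$; this is exactly why the transformation $T^2_x b = b - (1-\theta)x$ in \eqref{eq:T1T2} carries the factor $(1-\theta)$ that $T^1_x$ does not. The reindexing occurs in the layer-$1$ interlayer sum, and the outer $U^2_x$ handles the layer-$2$ component symmetrically.

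There is no genuine analytic obstacle here; the content is purely algebraic, which is why the lemma "follows immediately from the definitions." The only points requiring care are tracking which layer index absorbs the shift, and confirming that each change of variables $m_i = n_i \pm x$ is a bijection of $\mathbb{Z}$ (guaranteed since $x \in \mathbb{Z}$) so that the reindexed sums are genuinely unchanged. Notably, neither the evenness nor the decay of $h$ is needed for this step — only the affine structure of the argument $n_1 - (1-\theta)n_2 - b$ matters.
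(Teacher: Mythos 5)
Your proposal is correct and is exactly the direct componentwise verification that the paper has in mind when it states the lemma ``follows immediately from the definitions'' (the paper provides no written proof). The two change-of-variable identities you isolate, $n_1-(1-\theta)n_2-(b+x)=(n_1-x)-(1-\theta)n_2-b$ and $n_1-(1-\theta)n_2-(b-(1-\theta)x)=n_1-(1-\theta)(n_2-x)-b$, together with reindexing the interlayer sums over $\mathbb{Z}$, are precisely the whole content.
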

Identities \eqref{eq:layer_shifts} parallel \eqref{eq:unitary_operator_equivlence_0} in Definition \ref{def:standard_ergodic_operator_random_0}. We next require the following lemma, which establishes that the transformations \eqref{eq:T1T2} act ergodically when restricted to the unit cells of the other lattice with the uniform probability measure. For the proof, see e.g. \cite{Nadkarni1998}.
\begin{lemma}
    \label{lemma:ergodicity_chain_shift_transformations}
    Let $\tilde{T}^1_{x}$ and $\tilde{T}^2_{x}$ denote the transformations \eqref{eq:layer_shifts} restricted to the unit cells of layers $2$ and $1$ respectively, i.e.,
    \begin{equation} \label{eq:erg_trans}
        \begin{split}
            \tilde{T}^1_{x} b &= b + x \quad \quad \quad \quad \mod (1 - \theta),  \\
            \tilde{T}^2_{x} b &= b - (1-\theta) x \quad \mod 1
        \end{split}
    \end{equation}
    for each $b \in \mathbb{R}$ and $x \in \mathbb{Z}$. Then, if $0 < \theta < 1$ is irrational, the operators $( \tilde{T}^1_x )_{x \in \mathbb{Z}}$ form a group of measure-preserving transformations which acts ergodically on the probability space $([0,1-\theta),\mathcal{A},\mathbb{P}_2)$, where $\mathcal{A}$ denotes the Borel sets and $\mathbb{P}_2$ is the normalized Lebesgue measure on $[0,1-\theta)$. Similarly, the operators $( \tilde{T}^2_x )_{x \in \mathbb{Z}}$ act ergodically on the probability space $([0,1),\mathcal{A},\mathbb{P}_1)$, where $\mathbb{P}_1$ is the normalized Lebesgue measure on $[0,1)$. 
\end{lemma}
Note that the notation $\mathbb{P}_1$ for the measure on $[0,1)$ and $\mathbb{P}_2$ for the measure on $[0,1-\theta)$ is natural given that these are the unit cells of layers 1 and 2, respectively.

We can now define the limiting density of states measure concisely as follows.
\begin{definition}
    \label{def:dosm_incommensurate_chain_operator}
    The density of states measure for $H(b)$ is the measure $\text{d}k$ defined by the linear functional
    \begin{equation}
        g \mapsto \frac{1 - \theta}{1 + (1 - \theta)} \mathbb{E}_2[ \langle \delta_{10}, g(H(\cdot)) \delta_{10}\rangle ] + \frac{1}{1 + (1-\theta)} \mathbb{E}_1[ \langle \delta_{20}, g(H(\cdot)) \delta_{20}\rangle ] = \inty{\mathbb{R}}{}{ g }{k},
    \end{equation}
    where
    \begin{equation}
        \mathbb{E}_2[ \langle \delta_{10}, g(H(\cdot)) \delta_{10}\rangle ] = \inty{[0,1-\theta)}{}{ \langle \delta_{10}, g(H(b_2)) \delta_{10}\rangle }{\mathbb{P}_2} = \frac{1}{1 - \theta} \inty{[0,1-\theta)}{}{ \langle \delta_{10}, g(H(b_2)) \delta_{10}\rangle }{b_2},
    \end{equation}
    and 
    \begin{equation}
        \mathbb{E}_1[ \langle \delta_{20}, g(H(\cdot)) \delta_{20}\rangle ] = \inty{[0,1)}{}{ \langle \delta_{20}, g(H(b_1)) \delta_{20}\rangle }{\mathbb{P}_1} = \inty{[0,1)}{}{ \langle \delta_{20}, g(H(b_1)) \delta_{20}\rangle }{b_1}.
    \end{equation}
\end{definition}
In short, the limiting density of states measure is obtained by averaging the interlayer shift $b$ over the unit cells of each layer, up to normalization factors reflecting the excess of atoms in layer 2 compared with layer 1 within any interval $[-L,L]$.

We now state the theorem for the convergence of traces to the desired integral using Birkhoff's theorem for the incommensurate chain operators.
\begin{thm}
    \label{thm:convergence_dos_traces_weak_ergodic}
    Let $H(b)$ be the incommensurate chain operators defined in \eqref{eq:simple_hamiltonian_incommensurate_chain} equipped with the chain-shifting operators $U^1_{x}$ and $U^2_{x}$ in Lemma \ref{lemma:shifting_operators_incommensurate_operator_transformations}.
    If the difference $H_{\Lambda_L} - 1_{\Lambda_L}H(b)$ is trace-class and
    \begin{align}
        \operatorname{tr} \left| H_L(b) - 1_{\Lambda_{L}} H(b) \right| \leq \varepsilon(L)  |\Lambda_L|
        \label{eq:incommensurate_chain_op_truncation_trace_class_bound}
    \end{align} 
    with $\varepsilon(L) \to 0$ as $L \to \infty$, then
    \begin{multline}
         \frac{1}{|\Lambda_{L}|} \left( \sum_{x \in \Lambda_{L}^1} \langle \delta_{1x}, g(H_{L}(b)) \delta_{1x} \rangle + \sum_{x \in \Lambda_{L}^2} \langle \delta_{2x}, g(H_{L}(b)) \delta_{2x} \rangle \right)  \to \\ \frac{1-\theta}{1 + (1-\theta)} \int_{[0,1-\theta)} \langle \delta_{10}, g(H(b_{2})) \delta_{10}\rangle \,\text{d}\mathbb{P}_2   + \frac{1}{1+(1-\theta)} \int_{[0,1)} \langle \delta_{20}, g(H(b_{1})) \delta_{20}\rangle \,\text{d}\mathbb{P}_{1} \label{eq:result_convergence_incommensurate_chain_operators}
    \end{multline}
    for almost-all $b \in \mathbb{R}$ with respect to the Lebesgue measure and all $g \in C_c(\mathbb{R})$ as $L\to \infty$. In particular, we have that the measures $\text{d}k_{b,L}$ defined in \eqref{eq:finite_dosm_truncated_incommensurate_chain_operator} converge weakly to $\text{d}k$ as $L\to \infty$, i.e.
    \begin{align*}
        \lim_{L\to \infty} \int g \,\text{d}k_{b,L} = \int g \,\text{d}k.
    \end{align*}
\end{thm}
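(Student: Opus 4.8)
The plan is to follow the template of the proof of Theorem \ref{thm:DOS_convergence_from_truncated_operators}, carried out layer by layer. First I would reduce the claim to the special resolvent functions $g(x) = 1/(x-z)$ with $z \in \mathbb{C}\setminus\mathbb{R}$. Exactly as in the single-chain case, the resolvent identity rewrites the difference between the truncated and untruncated diagonal sums as $\frac{1}{|\Lambda_L|}\sum_{x\in\Lambda_L}\langle\delta_x, (H_L(b)-z)^{-1}(H(b)-H_L(b))(H(b)-z)^{-1}\delta_x\rangle$, and bounding this by the trace norm in \eqref{eq:incommensurate_chain_op_truncation_trace_class_bound} times $|\Im z|^{-2}$ gives an error $\leq \varepsilon(L)/|\Im z|^2 \to 0$. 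Hence it suffices to compute the $L\to\infty$ limit of $\frac{1}{|\Lambda_L|}\bigl(\sum_{x_1\in\Lambda_L^1}\langle\delta_{x_1}, g(H(b))\delta_{x_1}\rangle + \sum_{x_2\in\Lambda_L^2}\langle\delta_{x_2}, g(H(b))\delta_{x_2}\rangle\bigr)$ for the \emph{untruncated} operator, the only change from the single-chain proof being that the sum now runs over both layers.

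The heart of the argument is to treat the two layer sums separately via Birkhoff's theorem with the two distinct ergodic transformations of Lemma \ref{thm:ergodicity_chain_shift_transformations}. Using the conjugation identities \eqref{eq:layer_shifts} together with $U^1_{x_1}\delta_{0_1} = \delta_{x_1}$ and $U^2_{x_2}\delta_{0_2} = \delta_{x_2}$, I would write $\langle\delta_{x_1}, g(H(b))\delta_{x_1}\rangle = \langle\delta_{0_1}, g(H(T^1_{-x_1}b))\delta_{0_1}\rangle$ and $\langle\delta_{x_2}, g(H(b))\delta_{x_2}\rangle = \langle\delta_{0_2}, g(H(T^2_{-x_2}b))\delta_{0_2}\rangle$. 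The observation that reduces each sum to a genuine ergodic average is a periodicity fact: since $U^2_x$ fixes $\delta_{0_1}$ while $T^2_x$ shifts $b$ by $-(1-\theta)x$, the map $b \mapsto \langle\delta_{0_1}, g(H(b))\delta_{0_1}\rangle$ is $(1-\theta)$-periodic, and dually $b\mapsto\langle\delta_{0_2}, g(H(b))\delta_{0_2}\rangle$ is $1$-periodic. These periodicities let me replace $T^1_{-x_1}$ and $T^2_{-x_2}$ by the circle maps $\tilde T^1_{-x_1}$ and $\tilde T^2_{-x_2}$, so that Theorem \ref{thm:Birkhoff_expectation_convegence_ergodicity} yields, for a.e. $b$,
\begin{equation*}
\frac{1}{|\Lambda_L^1|}\sum_{x_1\in\Lambda_L^1}\langle\delta_{x_1}, g(H(b))\delta_{x_1}\rangle \to \int_{[0,1-\theta)}\langle\delta_{0_1}, g(H(b_2))\delta_{0_1}\rangle\,\text{d}\mathbb{P}_2,
\end{equation*}
and likewise for layer $2$ with $\mathbb{P}_1$. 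I would then combine these with the correct weights: from \eqref{eq:size_of_truncations}, $|\Lambda_L^1|/|\Lambda_L| \to (1-\theta)/(1+(1-\theta))$ and $|\Lambda_L^2|/|\Lambda_L|\to 1/(1+(1-\theta))$, which are exactly the prefactors in \eqref{eq:result_convergence_incommensurate_chain_operators}.

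Two points need care beyond the single-chain argument. First, unlike $\Lambda_L^1$, the index set $\Lambda_L^2 = \mathbb{Z}\cap[-\frac{L+b}{1-\theta},\frac{L-b}{1-\theta}]$ is neither symmetric about $0$ nor $b$-independent, so the relevant Birkhoff average runs over an asymmetric, $b$-dependent window. I would handle this by splitting $\Lambda_L^2$ into its nonnegative and negative parts, applying the one-sided ergodic theorem to each (using invertibility of the group action for the negative part), and recombining with weights that sum to $1$; since both endpoints tend to $\pm\infty$, the limit is unaffected by the $b$-dependence of the window. Second, the full-measure set on which Birkhoff convergence holds depends on $g$ through $z$; as in the proof of Theorem \ref{thm:DOS_convergence_from_truncated_operators}, I would fix the countable dense family $\mathcal{B} = \mathbb{Q}+i\mathbb{Q}_{\neq 0}$, intersect the countably many full-measure sets arising from both layers and all $z\in\mathcal{B}$ to obtain a single full-measure set of good $b$, and then extend from this resolvent algebra to all of $C_0(\mathbb{R})$ using the Stone--Weierstrass theorem (Theorem \ref{thm:stone-weierstrass}). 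I expect the asymmetric, $b$-dependent Birkhoff window for layer $2$ to be the main technical obstacle, though it becomes routine once the one-sided splitting is set up.
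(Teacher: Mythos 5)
Your proposal is correct and follows essentially the same route as the paper's proof: reduction to resolvents via the trace-norm estimate, splitting the diagonal sum into the two layer sums with the limiting weights $(1-\theta)/(1+(1-\theta))$ and $1/(1+(1-\theta))$, exploiting the invariance of $\delta_{0_1}$ under $U^2_y$ (equivalently, the $(1-\theta)$-periodicity of $b\mapsto\langle\delta_{0_1},g(H(b))\delta_{0_1}\rangle$ and the $1$-periodicity of the layer-$2$ observable) to pass from $T^i_x$ to the circle rotations $\tilde T^i_x$, Birkhoff on each layer, and the countable-intersection-plus-Stone--Weierstrass argument, with the full-measure sets lifted back to $\mathbb{R}$ by periodicity. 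Your explicit treatment of the asymmetric, $b$-dependent summation window $\Lambda_L^2$ via a one-sided splitting is a technical point the paper's proof passes over silently, so it is a welcome refinement rather than a deviation.
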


\begin{proof}[Proof of Theorem \ref{thm:convergence_dos_traces_weak_ergodic}]   
    The proof is similar to the proof of Theorem 3.15 of \cite{AizenmanWarzel2015}, but with subtle differences because the coupled chain Hamiltonian is not ergodic exactly in the sense of Definition \ref{def:standard_ergodic_operator_random_0}. We will first establish \eqref{eq:result_convergence_incommensurate_chain_operators} for $g(x) = (x-z)^{-1}$ with $z\in \mathbb{C}\setminus \mathbb{R}$. We will then extend to $g\in C_c(\mathbb{R})$ using Stone-Weierstrass. 


    First, note that for $g(x) = (x-z)^{-1}$ we have by a resolvent identity and \eqref{eq:incommensurate_chain_op_truncation_trace_class_bound} the estimate
    \begin{align}
        \left|\frac{1}{|\Lambda_{L}|} \sum_{x\in \Lambda_{L}} \langle \delta_{x}, g(H_{L}(b)) \delta_{x} \rangle - \frac{1}{|\Lambda_{L}|} \sum_{x\in \Lambda_{L}} \langle \delta_{x}, g(H(b)) \delta_{x} \rangle\right| \leq \frac{\varepsilon(L)}{|\Im(z)|^2}, \label{eq:uniform_estimate_incommensurate_chain_operator}
    \end{align}
    where we use the shorthand notation
    \begin{align*}
        \sum_{x\in \Lambda_{L}} \langle \delta_{x}, (\cdot) \delta_{x} \rangle = \sum_{x\in \Lambda_{L}^1} \langle \delta_{1x}, (\cdot) \delta_{1x} \rangle + \sum_{x\in \Lambda_{L}^2} \langle \delta_{2x}, (\cdot) \delta_{2x} \rangle.
    \end{align*}
    From this, we conclude
    \begin{equation}
        \lim_{L\to \infty} \left|\frac{1}{|\Lambda_{L}|} \sum_{x\in \Lambda_L} \langle \delta_{x}, g(H_L(b))\delta_{x}\rangle -  \frac{1}{|\Lambda_L|} \sum_{x\in \Lambda_L} \langle \delta_x, g(H(b))\delta_x\rangle \label{eq:limit_equivalence_truncations_incommensurate_chain} \right| = 0
    \end{equation}
    for any choice of $z\in \mathbb{C}\setminus \mathbb{R}$.

    Next, the second limit of \eqref{eq:limit_equivalence_truncations_incommensurate_chain} can be factored into a usable form for the application of Birkhoff's ergodic theorem: note we have that
    \begin{align}
        \frac{1}{|\Lambda_{L}|} \sum_{x\in \Lambda_L} \langle \delta_x, g(H(b))\delta_x\rangle & =
        \frac{|\Lambda_{L}^1|}{ |\Lambda_{L}^1|+|\Lambda_L^2|}  \left( \frac{1}{ |\Lambda_{L}^1|}  \sum_{x \in \Lambda_{L}^1} \langle \delta_{1x}, g(H(b))\delta_{1x}\rangle \right) \nonumber \\ & \qquad \qquad  +  \frac{|\Lambda_{L}^2|}{|\Lambda_{L}^1|+|\Lambda_L^2|} \left(\frac{1}{|\Lambda_{L}^2|} \sum_{x \in \Lambda_{L}^2} \langle \delta_{2x}, g(H(b))\delta_{2x}\rangle \right). \label{eq:limit_expansion_incommensurate_chain}
    \end{align}
    The terms in parentheses are the expressions we will use Birkhoff's ergodic theorem on. 

    First calculating the prefactors of the parentheses terms in \eqref{eq:limit_expansion_incommensurate_chain} above, we may simply calculate express the size of each truncation to each chain in \eqref{eq:size_of_truncations} as
    \begin{align*}
        |\Lambda_L^1| = 1 + 2L + \eta(L) \quad \text{and} \quad |\Lambda_L^2| = 1 + \frac{2L}{1-\theta} + \nu (L),
    \end{align*}
    where $\eta:\mathbb{R}\to \mathbb{R}$ and $\nu:\mathbb{R}\to \mathbb{R}$ are bounded functions of $L$.
    The limit of the size ratio can be calculated using these bounded functions to get
    \begin{align*}
        \lim_{L\to \infty} \frac{|\Lambda_{L}^1|}{|\Lambda_{L}^2|} = \lim_{L\to \infty} \frac{2 + \frac{1+ \eta(L)}{L}}{\frac{2}{1-\theta} + \frac{1 + \nu(L)}{L}} = 1-\theta,
    \end{align*}
    from which it can be derived that
    \begin{align}
        \lim_{L\to \infty}\frac{|\Lambda_{L}^1|}{ |\Lambda_{L}^1|+|\Lambda_L^2| } = \frac{1-\theta}{1 + (1-\theta)} \quad \text{and} \quad \lim_{L\to \infty}\frac{|\Lambda_{L}^2|}{ |\Lambda_{L}^1|+|\Lambda_L^2|} = \frac{1}{1+(1-\theta)}. \label{eq:limit_of_size_factors_incommensurate_chain}
    \end{align}
    This gives the prefactors of the summands on the right-hand side of \eqref{eq:limit_expansion_incommensurate_chain} in the limit as $L\to \infty$.

    Meanwhile, the limits of each summand in \eqref{eq:limit_expansion_incommensurate_chain} can be modified by shifting the $\delta_x$-vectors to a chosen fixed point, say to $\delta_{10}$ and $\delta_{20}$:
    \begin{align*}
        \sum_{x\in \Lambda_{L}^1} \langle \delta_{1x}, g(H(b))\delta_{1x}\rangle & = \sum_{x\in \Lambda_{L}^1} \langle U_x^1 \delta_{10}, g(H(b)) U^1_x \delta_{10}\rangle \\
        & = \sum_{x\in \Lambda_{L}^1} \langle  \delta_{10}, (U^1_x)^\dagger g(H(b)) U^1_x \delta_{10}\rangle,
    \end{align*}
    where $U^i_x$ is defined in Lemma \ref{lemma:shifting_operators_incommensurate_operator_transformations} and satisfies $(U^i_x)^\dagger = U^i_{-x}$ for each $x \in \mathbb{Z}$.
    Furthermore, since the $\delta_{1x}$ (any $x \in \mathbb{Z}$) vectors are fixed under $U_y^2$ for all $y \in \mathbb{Z}$, we have
    \begin{align}
        \sum_{x\in \Lambda_{L}^1} \langle \delta_{1x}, g(H(b))\delta_{1x}\rangle = \sum_{x\in \Lambda_{L}^1} \langle  \delta_{10}, ( U^2_y U^1_x)^\dagger g(H(b)) U^1_x U^2_y \delta_{10}\rangle \label{eq:periodicity_wrt_other_chain}
    \end{align}
    for any $y \in \mathbb{Z}$. In particular, note that for any $b\in \mathbb{R}$ and any $x \in \mathbb{Z}$, there is an integer $[b-x]_{1-\theta} \in \mathbb{Z}$ such that $b - x + (1-\theta)[b-x]_{1-\theta} \in [0,1-\theta)$. We now set $y$ in \eqref{eq:periodicity_wrt_other_chain} to equal this $[b-x]_{1-\theta}$. Since $g$ is a resolvent function, 
    we get 
    \begin{align} \label{eq:first_chain_periodicity_incommensurate_chain}
        \sum_{x\in \Lambda_{L}^1} \langle \delta_{1x}, g(H(b))\delta_{1x}\rangle & = \sum_{x\in \Lambda_{L}^1} \langle  \delta_{10},  g\left((U^1_x U^2_{[b-x]_{1-\theta}})^\dagger H(b)U^1_x U^2_{[b-x]_{1-\theta}}\right)  \delta_{10}\rangle \nonumber\\&  = \sum_{x\in \Lambda_{L}^1} \langle \delta_{10}, g(H(\tilde{T}^1_x b))\delta_{10}\rangle.
    \end{align}
    Note that because we chose $y$ to equal $[b-x]_{1-\theta}$, in the argument of $H$ we can replace $T_x^1 b$, which appeared in \eqref{eq:T1T2}, by $\tilde{T}_x^1 b$, the ergodic transformation of the unit cell of layer 2 appearing in Lemma \ref{lemma:ergodicity_chain_shift_transformations}.
    Likewise,
    \begin{align}
        \sum_{x\in \Lambda_{L}^2} \langle \delta_{2x}, g(H(b))\delta_{2x}\rangle = \sum_{x\in \Lambda_{L}^2} \langle \delta_{20}, g(H(\tilde{T}^2_x b))\delta_{20}\rangle \label{eq:second_chain_periodicity_incommensurate_chain}
    \end{align}
    for the summation over the second chain.

    We may now combine \eqref{eq:limit_of_size_factors_incommensurate_chain}, \eqref{eq:first_chain_periodicity_incommensurate_chain}, and \eqref{eq:second_chain_periodicity_incommensurate_chain} into  \eqref{eq:limit_expansion_incommensurate_chain}, getting
    \begin{align*}
        \lim_{L\to \infty} \frac{1}{|\Lambda_L|} \sum_{x\in \Lambda_L} \langle \delta_x, g(H(b))\delta_x\rangle& = \frac{1-\theta}{1 + (1-\theta)} \left( \lim_{L\to \infty} \frac{1}{|\Lambda_L^1|} \sum_{x\in \Lambda_{L}^1} \langle \delta_{10}, g(H(\tilde{T}^1_x b))\delta_{10}\rangle\right) \\
        & \qquad \qquad + \frac{1}{1 + (1-\theta)} \left(\lim_{L\to \infty} \frac{1}{|\Lambda_L^2|} \sum_{x\in \Lambda_{L}^2} \langle \delta_{20}, g(H(\tilde{T}^2_x b))\delta_{20}\rangle\right).
    \end{align*}
    Let $\tilde{b}_1 \in [0,1-\theta)$ be such that $\tilde{b}_1 = b \mod (1-\theta)\mathbb{Z}$, and $\tilde{b}_2 \in [0,1)$ be such that $\tilde{b}_2 = b \mod \mathbb{Z}$. We can now apply Birkhoff's ergodic theorem to each limit in the above expression to obtain, together with \eqref{eq:limit_equivalence_truncations_incommensurate_chain},
        \begin{align}
        \lim_{L\to \infty} \frac{1}{|\Lambda_L|} \sum_{x\in \Lambda_L} \langle \delta_x, g(H_L(b))\delta_x\rangle& = \frac{1-\theta}{1 + (1-\theta)} \int_{[0,1-\theta)} \langle \delta_{10}, g(H(b_{2})) \delta_{10}\rangle \,\text{d}\mathbb{P}_2 \nonumber \\
        &\qquad \qquad + \frac{1}{1 + (1-\theta)} \int_{[0,1)} \langle \delta_{20}, g(H(b_{1})) \delta_{20}\rangle \,\text{d}\mathbb{P}_{1}, \label{eq:pre_result_convergence_incommensurate_chain}
    \end{align}
    for $g(x) = (x - z)^{-1}$ and full measure sets $\tilde{b}_i \in \Omega^z_i$, $i \in \{1,2\}$. By taking the countable intersection of these sets over the set of $z \in \mathbb{C} \setminus \mathbb{R}$ with rational coefficients, we have convergence in the still full measure sets $\tilde{\Omega}_i$, $i \in \{1,2\}$, for a set of functions which is dense in $C_c(\mathbb{R})$ in the uniform topology, and hence for all $f \in C_c(\mathbb{R})$. We thus obtain convergence for all $b$ in the complement of the measure $0$ set given by the union of the sets $[0,1-\theta) \setminus \tilde{\Omega}_1$ and $[0,1) \setminus \tilde{\Omega}_2$ with their shifts by $(1-\theta)\mathbb{Z}$ and $\mathbb{Z}$.

\end{proof}


\remark{\label{remark:direct_continuous_proof_incommensurate_dosm_convergence}For a proof of Birkhoff's ergodic theorem, see \cite{Krengel1985}. For smooth functions a simpler proof by expanding the function in a Fourier series is available; see e.g. \cite{Massatt2017}.}


\subsection{Almost-sure spectrum for the incommensurate coupled chain model}
\label{section:almost-sure_spectrum_iccm}

In this section, we prove the following analog of Pastur's theorem for the full coupled chain model.
\begin{thm}
    \label{thm:almost-sure_spectrum_iccm}
    Let $H(b)$ be the incommensurate coupled chain operators in Definition \ref{def:incommensurate_coupled_chain_hamiltonian} with $\theta \in (0,1)$ irrational. Then there exists a set $\Sigma \subset \mathbb{R}$, called the \emph{almost-sure spectrum} of $H(b)$, such that $\sigma(H(b)) = \Sigma$ for almost all $b \in \mathbb{R}$.
\end{thm}
\begin{proof}
    The following proof parallels that of Theorem 3.10 of \cite{AizenmanWarzel2015}. For any $E_1,E_2\in \mathbb{R}$, the functions 
    \begin{align}
        b \to X_{E_1,E_2}(b) \vcentcolon= \operatorname{dim}\operatorname{range}1_{(E_1,E_2)}(H(b))
    \end{align}
    defined by the sum of nonnegative terms $X_{E_1,E_2}(b) = \sum_{i=1}^\infty \langle \psi_k, 1_{(E_1,E_2)}(H(b))\psi_k \rangle$ for any orthonormal basis $(\psi_k)_{k \in \mathbb{Z}^+}$ of the coupled chain Hilbert space $\mathcal{H}$, are measurable, taking values in $\mathbb{N}\cup \{+\infty\}$. Since the sum is independent of the choice of basis, we can choose WLOG $(\psi_k)_{k \in \mathbb{Z}^+} = (\delta_{1i})_{i\in \mathbb{Z}} \cup (\delta_{2j})_{j\in \mathbb{Z}}$ of $\mathcal{H}$ as in Definition \ref{def:truncated_dosm_iccm}, so that
    \begin{equation}
        X_{E_1,E_2}(b) = X^1_{E_1,E_2}(\tilde{b}_1) + X^2_{E_1,E_2}(\tilde{b}_2), \quad X^i_{E_1,E_2}(b) := \sum_{x\in \mathbb{Z}}\langle \delta_{ix}, 1_{(E_1,E_2)}(H(b)) \delta_{ix} \rangle, 
    \end{equation}
    where $\tilde{b}_1 \in [0,1-\theta)$ is such that $\tilde{b}_1 = b \mod (1-\theta)\mathbb{Z}$ and $\tilde{b}_2 \in [0,1)$ be such that $\tilde{b}_2 = b \mod \mathbb{Z}$. Now note that the action of the translation operators \eqref{eq:erg_trans} leaves the functions $X^i_{E_1,E_2}(\cdot)$ invariant by applying the unitary transformations in Lemma \ref{lemma:shifting_operators_incommensurate_operator_transformations}, and hence, by ergodicity (Exercise 3.1 of \cite{AizenmanWarzel2015}), we have that there exist numbers $c^1_{E_1,E_2}, c^2_{E_1,E_2} \in [0,\infty]$ such that
    \begin{equation}
        X^1_{E_1,E_2}(\tilde{b}_1) = c^1_{E_1,E_2}, \quad X^2_{E_1,E_2}(\tilde{b}_2) = c^2_{E_1,E_2}
    \end{equation}
    with probability $1$ in either case. It follows then that there exist numbers $c_{E_1,E_2} \in [0,\infty]$ such that
    \begin{equation} \label{eq:hold}
        X_{E_1,E_2}(b) = c_{E_1,E_2}
    \end{equation}
    for all $b \in \mathbb{R}$ in a full measure set, and moreover that the set of $b$ such that
    \begin{equation} \label{eq:full_meas}
        X_{E_1,E_2}(b) = c_{E_1,E_2} \text{ for all } E_1, E_2 \in \mathbb{Q} 
    \end{equation}
    still has full measure, being the countable intersection of full measure sets. But now recall the characterization of the spectrum in terms of spectral projections
    \begin{equation}
        \sigma(H(b)) = \left\{ E \in \mathbb{R} : \forall \epsilon > 0, \operatorname{dim}\operatorname{range}1_{(E-\epsilon,E+\epsilon)}(H(b)) > 0 \right\}.
    \end{equation}
    For all $b$ in the full measure set such that \eqref{eq:full_meas} holds, we can characterize the right-hand side using the $c_{E_1,E_2}$ resulting in 
    \begin{equation}
        \sigma(H(b)) = \Sigma := \left\{ E \in \mathbb{R} : c_{E_1,E_2} > 0 \text{ for all } E_1, E_2 \in \mathbb{Q} \text{ with } E_1 < E < E_2 \right\}
    \end{equation}
    for all $b \in \mathbb{R}$ in a full measure set. 
\end{proof}

\subsection{Convergence of the density of states measure for random incommensurate coupled chain model}
In this section, we define the density of states for the random incommensurate coupled chain model and show convergence from truncations of the operator as in Section \ref{sec:convergence_dosm_iccm}.

First, we similarly define the density of states for the random incommensurate coupled chain truncated to the interval $[-L,L]$ for $L>0$ on each chain with Dirichlet boundary conditions.

\begin{definition}
    For positive $L>0$, let $\Lambda_L^1$, $\Lambda_L^2$, $|\Lambda_L|$, and $1_{\Lambda_L}$ be defined as in Definition \ref{def:truncated_dosm_iccm}. 
    Then, let $H_{R,L}(b,\omega_1,\omega_2) \vcentcolon= 1_{\Lambda_L} H_R(b,\omega_1,\omega_2) 1_{\Lambda_L}$. 
    For each fixed $b\in \mathbb{R}$ and $\omega_1,\omega_2 \in \mathbb{R}^\mathbb{Z}$, and $L>0$, we define the density of states measure for $H_{R,L}(b,\omega_1,\omega_2)$ as the unique Radon measure $dk_{b,\omega_1,\omega_2,L}$ corresponding to the linear functional
    \begin{align}
        g\mapsto \frac{1}{|\Lambda_L|} \left( \sum_{x \in \Lambda_L^1} \braket{\delta_{1x}, g(H_{R,L}(b,\omega_1,\omega_2))\delta_{1x} }   + \sum_{x \in \Lambda_L^2} \braket{\delta_{2x}, g(H_{R,L}(b,\omega_1,\omega_2))\delta_{2x} }  \right) = \int_{\mathbb{R}} g\, \text{d}k_{b,\omega_1,\omega_2,L},
    \end{align}
    where $\delta_{ix}$, $i \in \{1,2\}$, are as in Definition \ref{def:truncated_dosm_iccm}.
\end{definition}

We now define the ergodic structure necessary for applying Birkhoff's ergodic theorem for proving the convergence of the density states measure in the large truncation limit. 

\begin{lemma} \label{lemma:random_iccm_covariance_shifting}
    For each $x\in \mathbb{Z}$, $b\in \mathbb{R}$, and $\omega_1,\omega_2 \in \mathbb{R}^\mathbb{Z}$, let $T_{R,x}^1,T_{R,x}^2:\mathbb{R} \times \mathbb{R}^\mathbb{Z}\times\mathbb{R}^\mathbb{Z} \to \mathbb{R} \times \mathbb{R}^\mathbb{Z}\times \mathbb{R}^\mathbb{Z}$ be defined with
    \begin{align}
            T_{R,x}^1(b, \omega_{1}, \omega_{2}) &= (b+x, (\omega_{1(j-x)})_{j \in \mathbb{Z}}, \omega_{2} ) , \\ T_{R,x}^2(b, \omega_{1}, \omega_{2}) &= (b-(1-\theta)x, \omega_{1}, (\omega_{2(j-x)})_{j \in \mathbb{Z}} ).
    \end{align}
    Then, 
    \begin{align}
        H(T_{R,x}^1(b,\omega_1,\omega_2)) = U_x^1 H(b,\omega_1,\omega_2) (U_x^1)^\dagger, \qquad H(T_{R,x}^2(b,\omega_1,\omega_2)) = U_x^2 H(b,\omega_1,\omega_2) (U_x^2)^\dagger,
    \end{align}
    where $U_x^1$ and $U_x^2$ are the right-shift operators on each chain defined as in Lemma \ref{lemma:shifting_operators_incommensurate_operator_transformations}. 
\end{lemma}

We next introduce and establish the ergodicity of the transformations in Lemma \ref{lemma:random_iccm_covariance_shifting} when acting on each chain individually.

\begin{definition} \label{def:ergodic_transf_random_iccm}  
    For each $x \in \mathbb{Z}$, let $\tilde{T}_{R,x}^1$ and $\tilde{T}_{R,x}^2$ denote the transformations given by
    \begin{align}
        \tilde{T}_{R,x}^1(b,\omega_1,\omega_2)  & = \left(b+x - (1-\theta) [b+x]_{1-\theta}, (\omega_{1(j-x)})_{j \in \mathbb{Z}}, (\omega_{2(j - [b+x]_{1-\theta})})_{j \in \mathbb{Z}} \right) \\
        \tilde{T}_{R,x}^2(b,\omega_1,\omega_2)  & = \left(b-(1-\theta)x + [b+x]_{1} , (\omega_{1(j-[b+x]_{1})})_{j\in \mathbb{Z}}, (\omega_{2(j-x)})_{j \in \mathbb{Z}} \right)
    \end{align}
    on the product probability spaces $([0,1-\theta) \times \mathbb{R}^\mathbb{Z} \times \mathbb{R}^\mathbb{Z}, \mathcal{A}_2, \mathbb{P}_{R,2} )$ and $([0,1) \times \mathbb{R}^\mathbb{Z}\times \mathbb{R}^\mathbb{Z}, \mathcal{A}_1, \mathbb{P}_{R,1})$, respectively,
    where $[b+x]_{1-\theta}$ and $[b+x]_{1}$ are the unique integers such that
    \begin{equation}
        b + x - [b+x]_{1-\theta} (1 - \theta) \in [0,1-\theta), \quad b + x - [b+x]_{1} \in [0,1).
    \end{equation}
\end{definition}

\begin{lemma} \label{lemma:random_iccm_ergodic_shifting_incom_unit_cells}
    The transformations $\{\tilde{T}^1_{R,x}\}_{x\in \mathbb{Z}}$ and $\{\tilde{T}^2_{R,x}\}_{x\in \mathbb{Z}}$ in Definition \ref{def:ergodic_transf_random_iccm} form groups which act ergodically on $([0,1-\theta) \times \mathbb{R}^\mathbb{Z} \times \mathbb{R}^\mathbb{Z}, \mathcal{A}_2, \mathbb{P}_{R,2})$ and $([0,1) \times \mathbb{R}^\mathbb{Z} \times \mathbb{R}^\mathbb{Z}, \mathcal{A}_2, \mathbb{P}_{R,1})$, respectively.
\end{lemma}
A proof of Lemma \ref{lemma:random_iccm_ergodic_shifting_incom_unit_cells} is contained in Appendix \ref{sec:proofs_random_iccm}.

We now give a theorem for the convergences of the density of states for the random incommensurate chain operators.
Most details of the proof are parallel to that of Theorem \ref{thm:convergence_dos_traces_weak_ergodic}.
\begin{definition}
    \label{def:dosm_random_iccm}
    The density of states measure for $H_R(b,\omega_1,\omega_2)$ is the measure $\text{d}k$ from the linear functional
    \begin{align}
        g \mapsto \frac{1 - \theta}{1 + (1 - \theta)} \mathbb{E}_{2,R}[ \langle \delta_{10}, g(H_R(\cdot)) \delta_{10}\rangle ] + \frac{1}{1 + (1-\theta)} \mathbb{E}_{1,R}[ \langle \delta_{20}, g(H_R(\cdot)) \delta_{20}\rangle ] = \inty{\mathbb{R}}{}{ g }{k},\label{eq:dosm_random_iccm}
    \end{align}
    where 
    \begin{align}
        \mathbb{E}_{2,R}[ \langle \delta_{10}, g(H_R(\cdot)) \delta_{10}\rangle ] = \int_{[0,1-\theta)\times\mathbb{R}^\mathbb{Z}\times\mathbb{R}^\mathbb{Z}} \braket{\delta_{10}, g(H_R(b,\omega_1,\omega_2)) \delta_{10}} \,\text{d}\mathbb{P}_{R,2}
    \end{align}
    and
    \begin{align}
        \mathbb{E}_{1,R}[ \langle \delta_{20}, g(H_R(\cdot)) \delta_{20}\rangle ] = \int_{[0,1)\times\mathbb{R}^\mathbb{Z}\times\mathbb{R}^\mathbb{Z}}\braket{\delta_{20}, g(H_R(b,\omega_1,\omega_2)) \delta_{20}} \,\text{d}\mathbb{P}_{R,1}.
    \end{align}
\end{definition}

\begin{thm}
    \label{thm:dos_convergence_random_iccm}
    Let $H_{R}(b,\omega_1,\omega_2)$ be the random incommensurate operator in Definition \ref{def:random_iccm_operator} with $\theta \in (0,1)$ irrational.
    If the difference $H_{R,L}(b,\omega_1,\omega_2) - 1_{\Lambda_L}H_R(b,\omega_1,\omega_2)$ is trace-class and
    \begin{align}
        \operatorname{tr}\left| H_{R,L}(b,\omega_1,\omega_2) - 1_{\Lambda_L}H_R(b,\omega_1,\omega_2) \right| \leq \epsilon(L) |\Lambda_L| \label{eq:trace_bound_dos_convergence_random_iccm}
    \end{align}
    with $\epsilon(L) \to 0$ as $L\to \infty$, then
    \begin{align}
        \frac{1}{|\Lambda_L|}  \sum_{x \in \Lambda_L} \braket{\delta_{x}, g(H_{R,L}(b,\omega_1,\omega_2)) \delta_{x}}  \to &  \,\frac{1 - \theta}{1 + (1-\theta)} \int_{[0,1-\theta)\times\mathbb{R}^\mathbb{Z}\times\mathbb{R}^\mathbb{Z}} \braket{\delta_{0_1}, g(H_R(b,\omega_1,\omega_2)) \delta_{0_1}} \,\text{d}\mathbb{P}_{R,2} \nonumber \\   
        &\quad +   \frac{1}{1+(1-\theta)} \int_{[0,1)\times\mathbb{R}^\mathbb{Z}\times\mathbb{R}^\mathbb{Z}}\braket{\delta_{0_2}, g(H_R(b,\omega_1,\omega_2)) \delta_{0_2}} \,\text{d}\mathbb{P}_{R,1} \label{eq:dos_convergence_random_iccm}
    \end{align}
    for all $g\in C_0(\mathbb{R})$ and almost-all $(b,\omega_1,\omega_2) \in \mathbb{R} \times \mathbb{R}^\mathbb{Z} \times \mathbb{R}^\mathbb{Z}$ with respect to the product measure as $L \to \infty$.
\end{thm}
\begin{proof}
    We provide a similar proof to that of Theorem \ref{thm:convergence_dos_traces_weak_ergodic}, establishing \eqref{eq:dos_convergence_random_iccm} first for $g(x) = (x-z)^{-1}$ with $z\in \mathbb{C}\setminus\mathbb{R}$ and extending to more general functions using Stone-Weierstrass.
    
    First, the trace-norm bound of \eqref{eq:trace_bound_dos_convergence_random_iccm} gives the estimate 
    \begin{align}
        \left| \frac{1}{|\Lambda_L|} \sum_{x\in \Lambda_L} \braket{\delta_{x}, g(H_{R,L}(b,\omega_1,\omega_2))\delta_x} - \frac{1}{|\Lambda_L|} \sum_{x\in \Lambda_L} \braket{\delta_{x}, g(H_{R,L}(b,\omega_1,\omega_2))\delta_x} \right| \leq \frac{\epsilon(L)}{|\operatorname{Im}(z)|^2}
    \end{align}
    on such resolvent functions $g$.
    Hence, for any $z\in \mathbb{C}\setminus \mathbb{R}$
    \begin{align}
        \lim_{L\to \infty} \left| \frac{1}{|\Lambda_L|} \sum_{x\in \Lambda_L} \braket{\delta_{x}, g(H_{R,L}(b,\omega_1,\omega_2))\delta_x} - \frac{1}{|\Lambda_L|} \sum_{x\in \Lambda_L} \braket{\delta_{x}, g(H_{R,L}(b,\omega_1,\omega_2))\delta_x} \right| = 0. \label{eq:limit_truncation_approximation_convergence_dos_random_iccm}
    \end{align}
    Recall lastly that we decompose the summation over the untruncated operator by 
    \begin{align}
        \frac{1}{|\Lambda_{L}|} \sum_{x\in \Lambda_L} \langle \delta_x, g(H_R(b,\omega_1,\omega_2))\delta_x\rangle & =
        \frac{|\Lambda_{L}^1|}{ |\Lambda_{L}^1|+|\Lambda_L^2|}  \left( \frac{1}{ |\Lambda_{L}^1|}  \sum_{x_1\in \Lambda_{L}^1} \langle \delta_{x_1}, g(H_R(b,\omega_1,\omega_2))\delta_{x_1}\rangle \right) \nonumber \\ & \qquad \qquad  +  \frac{|\Lambda_{L}^2|}{|\Lambda_{L}^1|+|\Lambda_L^2|} \left(\frac{1}{|\Lambda_{L}^2|} \sum_{x_2\in \Lambda_{L}^2} \langle \delta_{x_2}, g(H_R(b,\omega_1,\omega_2))\delta_{x_2}\rangle \right),\label{eq:limit_expansion_convergence_dos_random_iccm}
    \end{align}
    where the limiting quantity of the prefactors are given by \eqref{eq:limit_of_size_factors_incommensurate_chain}.

    We next utilize the shift-covariance relationship to apply Birkhoff's theorem on each chain. For the summation over the first chain's sites in \eqref{eq:limit_expansion_convergence_dos_random_iccm}, 
    \begin{align}
        \sum_{x \in \Lambda^1_L} \braket{ \delta_{1x}, g(H_R(b,\omega_1,\omega_2))\delta_{1x} } = \sum_{x\in \Lambda^1_L} \braket{\delta_{10}, (U^1_x)^\dagger g(H_R(b,\omega_1,\omega_2)) U^1_x \delta_{10} } \nonumber 
    \end{align}
    via the definitions in Lemma \ref{lemma:random_iccm_covariance_shifting}. By the shift invariance of $\delta_{1x}$ with respect to shifts of the opposite chain, 
    \begin{align}
        \sum_{x \in \Lambda^1_L} \braket{ \delta_{1x}, g(H_R(b,\omega_1,\omega_2))\delta_{1x} } = & \sum_{x\in \Lambda^1_L} \braket{\delta_{10}, ( U^2_{[x+b]_{1-\theta}} U^1_x)^\dagger g(H_R(b,\omega_1,\omega_2)) U^1_x U^2_{[x+b]_{1-\theta}}\delta_{10} } \nonumber \\
        & = \sum_{x\in \Lambda^1_L} \braket{\delta_{10},  g(H_R(\tilde{T}^1_{x}(b,\omega_1,\omega_2))) \delta_{10} } \label{eq:1_chain_transformation_before_lim_convergence_dos_random_iccm}
    \end{align}
    from Definition \ref{def:ergodic_transf_random_iccm}. Likewise,
    \begin{align}
        \sum_{x \in \Lambda^1_L} \braket{ \delta_{2x}, g(H_R(b,\omega_1,\omega_2))\delta_{2x} }  
        & = \sum_{x\in \Lambda^1_L} \braket{\delta_{20},  g(H_R(\tilde{T}^2_{x}(b,\omega_1,\omega_2))) \delta_{20} }. \label{eq:1-theta_chain_transformation_before_lim_convergence_dos_random_iccm}
    \end{align}

    Next, combining \eqref{eq:limit_expansion_convergence_dos_random_iccm}, \eqref{eq:1_chain_transformation_before_lim_convergence_dos_random_iccm}, \eqref{eq:1-theta_chain_transformation_before_lim_convergence_dos_random_iccm}, and \eqref{eq:limit_of_size_factors_incommensurate_chain}, 
    \begin{align}
        \lim_{L\to \infty} \frac{1}{|\Lambda_L|} \sum_{x\in \Lambda_L} \braket{\delta_x, g(H_R(b,\omega_1,\omega_2))\delta_x} & = \frac{1-\theta}{1+(1-\theta)}  \left( \lim_{L\to \infty} \frac{1}{ |\Lambda_{L}^1|}  \sum_{x\in \Lambda_{L}^1} \langle \delta_{10}, g(H_R(\tilde{T}^1_{x}(b,\omega_1,\omega_2)))\delta_{10}\rangle \right) \nonumber  \\    & \quad  +  \frac{1}{1+(1-\theta)} \left(\lim_{L\to \infty}\frac{1}{|\Lambda_{L}^2|} \sum_{x\in \Lambda_{L}^2} \langle \delta_{20}, g(H_R(\tilde{T}^2_x(b,\omega_1,\omega_2)))\delta_{20}\rangle \right). \nonumber 
    \end{align}
    We can now apply Birkhoff's ergodic theorem for each $z \in \mathbb{C}\setminus \mathbb{R}$ to see that 
    \begin{align}
        \lim_{L\to \infty} \frac{1}{|\Lambda_L|} \sum_{x\in \Lambda_L} \braket{\delta_x, g(H_R(b,\omega_1,\omega_2))\delta_x} & = \frac{1-\theta}{1+(1-\theta)}  \int_{[0,1-\theta)\times \mathbb{R}^\mathbb{Z}\times\mathbb{R}^\mathbb{Z}} \langle \delta_{10}, g(H_R(\tilde{T}^1_{x}(b,\omega_1,\omega_2)))\delta_{10}\rangle \,\text{d}\mathbb{P}_{R,2} \nonumber  \\      +\,&  \frac{1}{1+(1-\theta)} \int_{[0,1)\times \mathbb{R}^\mathbb{Z}\times \mathbb{R}^\mathbb{Z}}\langle \delta_{20}, g(H_R(\tilde{T}^2_x(b,\omega_1,\omega_2)))\delta_{20}\rangle \,\text{d}\mathbb{P}_{R,1} \label{eq:pre-result_convergence_dos_random_iccm}
    \end{align}
    when $g(x) = (x-z)^{-1}$ for $(b,\omega_1,\omega_2)$ in a full measure set in $\mathbb{R}\times \mathbb{R}^\mathbb{Z}\times\mathbb{R}^\mathbb{Z}$, similar to Theorem \ref{thm:convergence_dos_traces_weak_ergodic}. The extension to general $g \in C_c(\mathbb{R})$ using Stone-Weierstrass is just as in the proof of Theorem \ref{thm:dos_convergence_random_iccm}.
\end{proof}

\subsection{Almost-sure spectrum for random incommensurate operators}

In this section, we establish the almost-sure spectrum for the incommensurate coupled chain model with on-site disorder.
The mathematical statements and content are similar to that of Section \ref{section:almost-sure_spectrum_iccm}.

\begin{thm}
    Let $H_R(b,\omega_1,\omega_2)$ be the random incommensurate coupled chain operators in Definition \ref{def:random_iccm_operator} with $\theta \in (0,1)$ irrational.
    Then there exists a set $\Sigma_R \subset \mathbb{R}$, called the \emph{almost-sure spectrum} of $H_R(b,\omega_1,\omega_2)$, such that the condition 
    \begin{align}
        \left\{ (b,\omega_1,\omega_2) \in \mathbb{R}\times \mathbb{R}^\mathbb{Z} \times \mathbb{R}^\mathbb{Z}: \sigma(H_R(b,\omega_1,\omega_2)) =\Sigma_R \right\}
    \end{align}
    has full-Lebesgue measure in $\mathbb{R}\times \mathbb{R}^\mathbb{Z} \times \mathbb{R}^\mathbb{Z}$.
\end{thm}
\begin{proof}
    The proof is similar to that of Theorem \ref{thm:almost-sure_spectrum_iccm}. 
    The only notable difference is in the calculation paralleling \eqref{eq:1_chain_transformation_before_lim_convergence_dos_random_iccm} and \eqref{eq:1-theta_chain_transformation_before_lim_convergence_dos_random_iccm} through applying Lemmas \ref{lemma:random_iccm_covariance_shifting} and \ref{lemma:random_iccm_ergodic_shifting_incom_unit_cells} instead of \ref{lemma:shifting_operators_incommensurate_operator_transformations} and \ref{lemma:ergodicity_chain_shift_transformations}.
\end{proof}

\section{Numerical Results} \label{sec:numerics}

\subsection{Numerical Method}

In this section we discuss computation of the limiting densities of states derived in the previous sections. In order to numerically compute \eqref{eq:red_this}, \eqref{eq:pre_result_convergence_incommensurate_chain}, and \eqref{eq:pre-result_convergence_dos_random_iccm} efficiently and to high accuracy, we used the popular kernel polynomial method (KPM). This method uses Chebyshev polynomials for quick calculations of diagonal elements of functions of Hamiltonians $H$, which we write here simply as $\langle \delta_x, g(H) \delta_x \rangle$. The details of the method can be found in \cite{Weisse2006} and the particular algorithm used can be found in \cite{Massatt2017}. We provide a brief recap of this method for the convenience of the reader in Appendix \ref{sec:KPM}. 
We then approximated the integrals in \eqref{eq:red_this}, \eqref{eq:pre_result_convergence_incommensurate_chain}, and \eqref{eq:pre-result_convergence_dos_random_iccm} with respect to disregistry with Riemann sums.

In order to apply $g$ to our different Hamiltonians numerically, we had to truncate the Hamiltonian. 
Since the density of states integral only requires evaluation at one site, we chose that site as the center of the cell and only expanded the Hamiltonian out to the nearest $R$ neighbors in each direction. The middle sites were chosen to be aligned at zero disregistry ($b = 0$). In order to compare wide ranges of different interatomic spacings, the interatomic distances were chosen to be $\sqrt{q/p}$ and $\sqrt{p/q}$ for the top chain and bottom chain respectively, for integers $p, q$. This creates an interatomic ratio of $p/q$. Note that exponential convergence can be expected with respect to $R$ for real-analytic $g$ using Combes-Thomas estimates on the resolvent \cite{1973CombesThomas}.
To calculate the plots seen below, we generated a list of $p$ and $q$ values and the associated matrices corresponding to these values and then used the kernel polynomial method to find the density of states at each desired energy value. The list of $p$ and $q$ values was determined by setting a maximal value $M$ and then creating a list of values that summed to $M$ such that $p+q = M$. The minimum $p$ value was chosen to be $M/7$.

\begin{figure}
    \centering
    \includegraphics[width=0.8\linewidth]{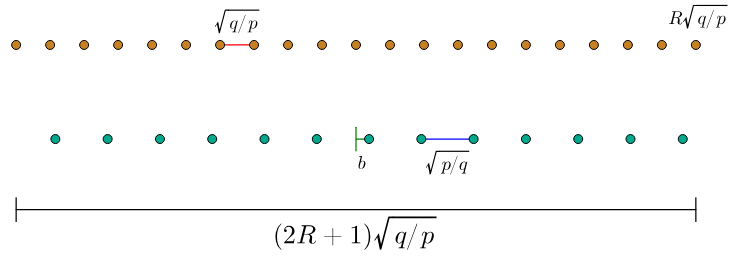}
    \caption{A small $R=10$ demonstration of how the coupled chains are arranged. The center atoms in the chain are aligned at $b=0$ and $b>0$ shifts the entire lower chain.}
    \label{fig:chaindemo}
\end{figure} 

\subsection{Numerical results}

We show in Figure \ref{fig:rcnumerics_Lap} the computed density of states for the reduced incommensurate chain model \eqref{eq:reduced_single_chain} using the integral formula \eqref{eq:red_this}, and in Figures \ref{fig:rcnumerics} and \ref{fig:rczoom} computations of the density of states for the model \eqref{eq:reduced_single_chain_noLap} where the inverse Laplacian is replaced by identity. The parameters used are shown in the following tables. Here $p_{\text{min}}$ and $p_{\text{max}}$ are the minimum and maximum $p$ values, where $p, M, R$ are as in the previous section, $N, \eta$ are the maximum degree of Chebyshev polynomial used and rescaling constant, respectively (see Appendix \ref{sec:KPM}), $\delta E$ is the energy discretization, and $B$ is the number of grid points used to evaluate the integral over $b$. We chose $h(x)$ to have the form as in \eqref{eq:inter_hop} with $A= 2$, $B = 5$, and $L = .1$. We chose $\mu$ as 10, $A=\sqrt{\mu}$ so that the interlayer hopping has the same magnitude as the intralayer term. 
\begin{center}
\begin{tabular}{|c|c|c|c|c|c|c|c|}
     \hline
     $M$ & $p_{\text{min}}$ & $p_{\text{max}}$ & $N$ & $\eta$ & $\Delta E$ & $R$ & $B$ \\
     \hline
     1000&142&858&200&$1/5$&.01&50 & 100\\ 
     \hline
\end{tabular} \\ \vspace{1mm} Parameter table for Figures \ref{fig:ccnumerics}, \ref{fig:rcnumerics} and \ref{fig:onsitenumerics}
\end{center}
\begin{center}
\begin{tabular}{|c|c|c|c|c|c|c|c|}
     \hline
     $M$ & $p_{\text{min}}$ & $p_{\text{max}}$ & $N$ & $\eta$ & $\Delta E$ & $R$ & $B$ \\
     \hline
     1000&14&299&500&$1/5$&.01&50 & 100\\ 
     \hline
\end{tabular}\\ \vspace{1mm} Parameter table for Figure \ref{fig:rczoom}
\end{center}
The computations show a complex fractal-like structure reminiscent of the Hofstadter butterfly \cite{PhysRevB.14.2239}. We see that the energy of the system increases as the ratio gets smaller. As we proceed left in the plot, the $q$ parameter gets larger relative to the $p$ parameter. This corresponds with the spacing between sites in the second layer increasing, while sites in the first layer, with interatomic distance $\sqrt{q/p}$, get further apart. This results in the second layer forming an onsite potential at the central site. The model effectively becomes a discrete Laplacian (with no onsite term since we dropped the onsite term in our numerics) but adding a single identity term. This shifts the base spectrum of $[-2,2]$ upwards to $[-1,3]$.

\begin{figure}
    \centering
    \includegraphics[width=0.75\linewidth]{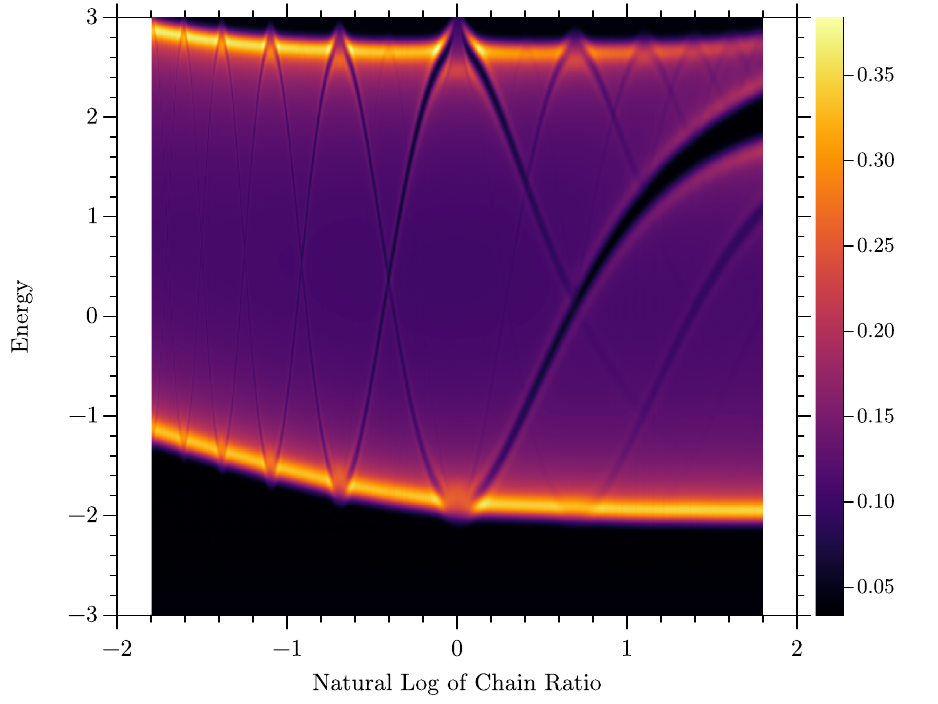}
    \caption{Density of states of the reduced incommensurate coupled chain operator \eqref{eq:reduced_single_chain} in color against natural log of interatomic distance ratio and energy level.}
    \label{fig:rcnumerics_Lap}
\end{figure}
\begin{figure}
    \centering
    \includegraphics[width=0.75\linewidth]{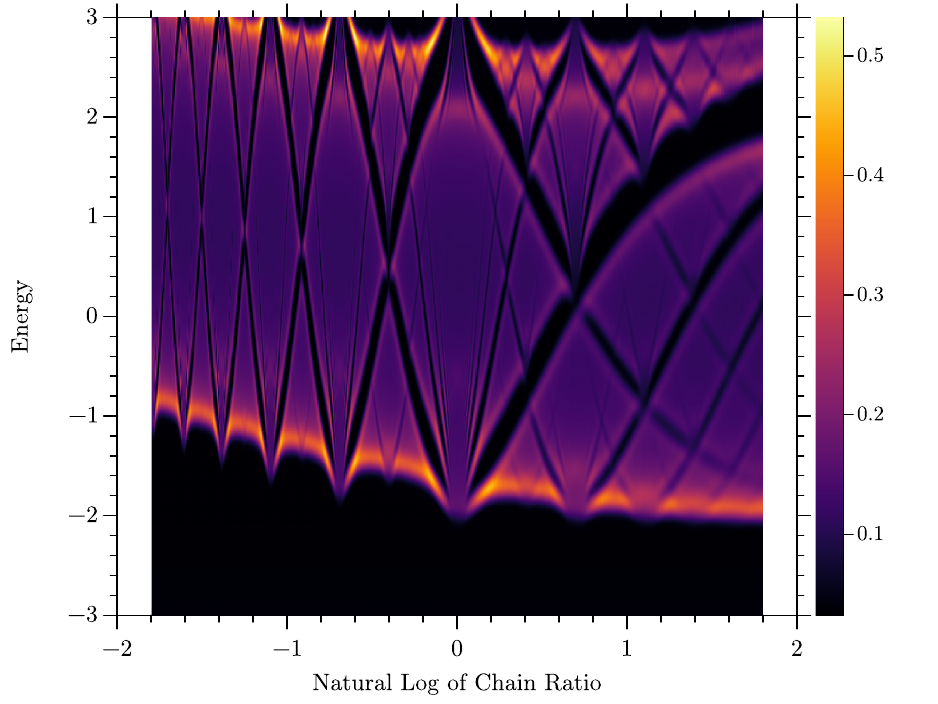}
    \caption{Density of states of the reduced incommensurate coupled chain operator with inverse Laplacian replaced by identity \eqref{eq:reduced_single_chain_noLap} in color against natural log of interatomic distance ratio and energy level.}
    \label{fig:rcnumerics}
\end{figure}
\begin{figure}
    \centering
    \includegraphics[width=0.75\linewidth]{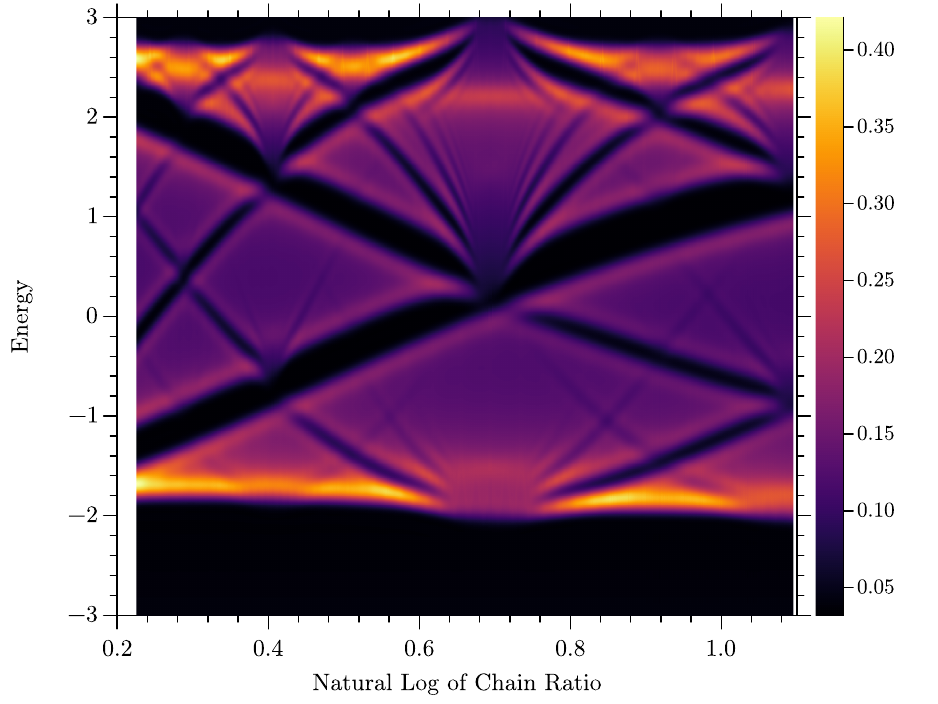}
    \caption{Higher resolution zoom of Figure \ref{fig:rcnumerics} showing chain ratios ranging from 1.25 to 3. Self similarity near the top and bottom of the spectrum suggests a fractal structure, especially for $\ln(p/q) \approx .7 \implies p/q \approx 2$. 
    \label{fig:rczoom}}
\end{figure}

\begin{figure}
    \centering
    \includegraphics[width=0.75\linewidth]{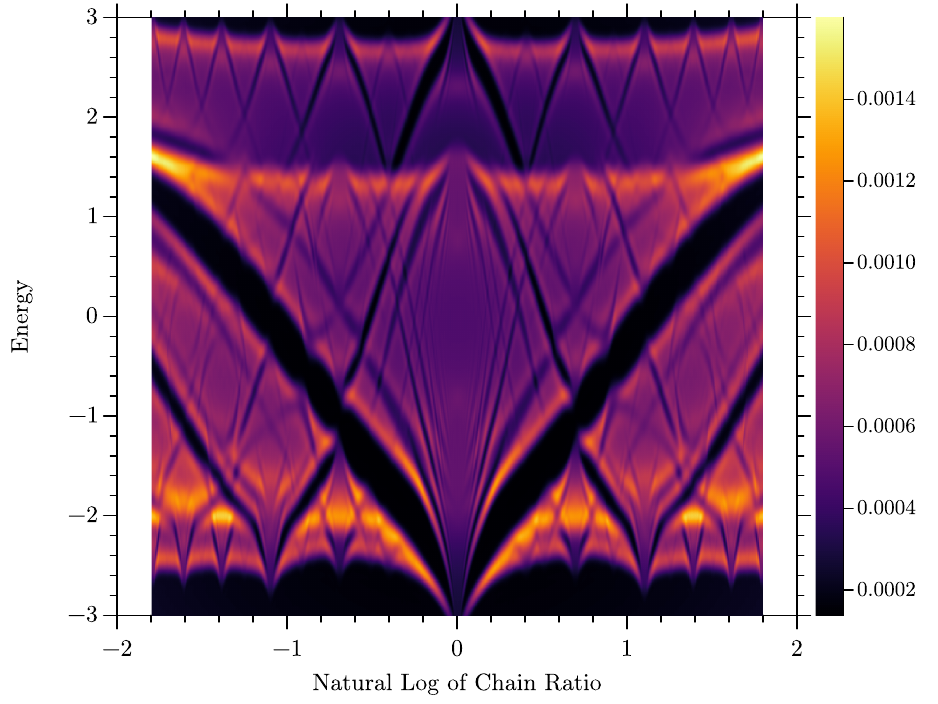}
    \caption{Density of states of the incommensurate coupled chain operator in color against natural log of interatomic distance ratio and energy level}
    \label{fig:ccnumerics}
\end{figure}

We show in Figure \ref{fig:ccnumerics} the computations of the density of states for the full incommensurate chain model \eqref{eq:coupled_chain_H} using the integral formula \eqref{eq:result_convergence_incommensurate_chain_operators}. To allow for comparison with the reduced and full coupled chain results, the parameters were kept the same. The results are consistent with calculations done by Canc\`es, Cazeaux, and Luskin in \cite{Cances2017a}, again showing a complex fractal-like structure reminiscent of the Hofstadter butterfly. In this case the structure is exactly symmetric due to the chains being interchangable. Note that in the calculations done in \cite{Cances2017a}, the hopping function was Gaussian. The similarity of our results shows that the precise form of the hopping function does not impact the density of states overly much.

We show in Figure \ref{fig:onsitenumerics} computations of the density of states for the incommensurate chain model with disorder \eqref{eq:rand_coup_chain} using the integral formula \eqref{eq:dosm_random_iccm}. The probability measure on $\mathbb{R}$ that we selected is the uniform measure on $[-1,1]$ for each site. We chose this so that the spectral radius of the onsite operator would not be much larger than the regular chain operator. Rather than carrying out the full integral over the probability distribution, we added random values from the distribution to each diagonal element of the matrix at each value of the disregistry. 
We expect that this approach converges to the true result as we increase the discretization of the disregistry.

The picture displays many of the same characteristics of the unmodified coupled chain, with large low density gaps in arcs extending from the bottom of the spectrum. However, much of the fine grained behavior near whole number ratio points like $\ln\left(2\right)$ (i.e., where $\ln(p/q) = \ln(2) \implies p/q = 2$) have mostly disappeared and been replaced with noise. The stronger gap at $\ln(3)$ is still faintly visible as well. The gaps still remaining in this image may represent more closely physically realizable gaps for materials at finite temperature.

\begin{figure}
    \centering
    \includegraphics[width=0.75\linewidth]{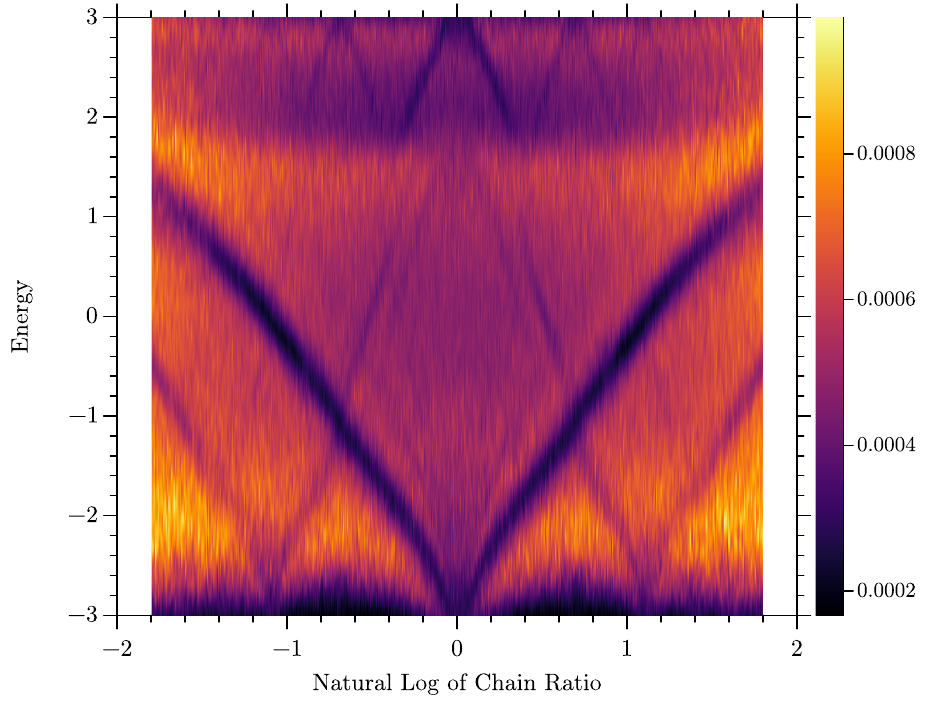}
    \caption{Density of states of the incommensurate coupled chain model with disorder against the interatomic distance ratio and energy level.}
    \label{fig:onsitenumerics}
\end{figure}
\bibliographystyle{plain}
\bibliography{references_math}

\appendix

\section{Proof of Lemma \ref{lemma:random_iccm_ergodic_shifting_incom_unit_cells}} \label{sec:proofs_random_iccm}

Let $(\mathbb{R},\mathcal{B}_0,\mathbb{P}_0)$ be a probability space where $\mathcal{B}_0$ is the Borel $\sigma$-algebra and $\mathbb{P}_0$ is compactly supported. We start by reviewing the proof of ergodicity for the simpler case of the product space $(\mathbb{R}^\mathbb{Z},\mathcal{B},\mathbb{P})$, where $\mathcal{B}$ and $\mathbb{P}$ are the product $\sigma$-algebra and measure, respectively (e.g. Exercise 3.2 of \cite{AizenmanWarzel2015}). Consider the shifts $(T_x)_{x \in \mathbb{Z}}$ acting on $\mathbb{R}^\mathbb{Z}$ by $\left( T_x \omega \right)_j = \omega_{j-x}$. We claim these transformations are \emph{mixing}, i.e.,
\begin{equation} \label{eq:mix}
    \mathbb{P}\left( A \cap T_x^{-1} B \right) = \mathbb{P}(A) \mathbb{P}(B) \text{ as $|x| \rightarrow \infty$}
\end{equation}
for all $A, B \in \mathcal{B}$. Note that ergodicity follows, since if $A \in \mathcal{B}$ is invariant, i.e., $T_x A = A$ for all $x \in \mathbb{Z}$, by evaluating \eqref{eq:mix} at $A = B = T_x^{-1} B$ we have
\begin{equation}
    \mathbb{P}(A) = \mathbb{P}(A)^2 \implies \mathbb{P}(A) \in \{0,1\}.
\end{equation}
To see that $(T_x)_{x \in \mathbb{Z}}$ are mixing, assume initially that $A$ and $B$ are \emph{finite-dimensional subsets}, i.e., equal to $\mathbb{R}$ for all but finitely-many components. Then, by taking $x$ sufficiently large, we can arrange that the components of $A$ and $T_x^{-1} B$ not equal to $\mathbb{R}$ are disjoint. We can then compute 
\begin{equation}
    \mathbb{P}( A \cap T_x B ) = \mathbb{P}( A ) \mathbb{P}( T_x B ) = \mathbb{P}( A ) \mathbb{P}( B ).
\end{equation}
To prove the result for arbitrary $A, B \in \mathcal{B}$, fix $\epsilon > 0$. Let $A_0, B_0$ be finite-dimensional so that $\mathbb{P}(A\Delta A_0) + \mathbb{P}(B\Delta B_0) < \epsilon$ (Section 13 Theorem D of \cite{halmos2013measure}). Then we can prove, writing $A = A_0 \cup ( A \Delta A_0 )$, $B = B_0 \cup ( B \Delta B_0 )$, that
\begin{equation}
    \left| \mathbb{P}( A \cap T_x B ) - \mathbb{P}( A_0 \cap T_x B_0 ) \right| < \epsilon. 
\end{equation}
By taking $|x|$ sufficiently large, we then have by the previous argument that 
\begin{equation}
    \left| \mathbb{P}( A \cap T_x B ) - \mathbb{P}( A_0 ) \mathbb{P}( B_0 ) \right| < \epsilon, 
\end{equation}
and then approximating again we have
\begin{equation}
    \left| \mathbb{P}( A \cap T_x B ) - \mathbb{P}( A ) \mathbb{P}( B ) \right| < 2 \epsilon,
\end{equation}
which suffices since $\epsilon > 0$ was arbitrary.

\begin{proof}[Proof of Lemma \ref{lemma:random_iccm_ergodic_shifting_incom_unit_cells}]
    We consider the transformations $\{\tilde{T}^1_{R,x}\}_{x\in \mathbb{Z}}$ without loss of generality. Recall that these transformations act on the product probability space $([0,1-\theta) \times \mathbb{R}^\mathbb{Z} \times \mathbb{R}^\mathbb{Z}, \mathcal{A}_2, \mathbb{P}_{R,2} )$ as
    \begin{equation}
        \tilde{T}_{R,x}^1(b,\omega_1,\omega_2) = \left(b+x - (1-\theta) [b+x]_{1-\theta}, (\omega_{1(j-x)})_{j \in \mathbb{Z}}, (\omega_{2(j - [b+x]_{1-\theta})})_{j \in \mathbb{Z}} \right).
    \end{equation}
    Here $\mathcal{A}_2, \mathbb{P}_{R,2}$ denote the product Borel $\sigma$-algebra and product measure between the probability spaces $([0,1-\theta),\mathcal{A},\mathbb{P}_2)$, where $\mathcal{A}$ is the Borel $\sigma$-algebra and $\mathbb{P}_2$ is the normalized Lebesgue measure, and $(\mathbb{R}^\mathbb{Z} \times \mathbb{R}^\mathbb{Z},\mathcal{B},\mathbb{P})$, where $\mathcal{B}$ is the product Borel $\sigma$-algebra and $\mathbb{P}$ is the product measure over the probability spaces $(\mathbb{R},\mathcal{B}_0,\mathbb{P}_0)$, where $\mathcal{B}_0$ is the Borel $\sigma$-algebra and $\mathbb{P}_0$ is compactly supported.
    That these transformations form a group is clear.

    We now essentially follow the proof that if a transformation $\sigma$ is mixing then $\sigma \times \tau$ is ergodic for every ergodic $\tau$ (Proposition 4.21 of \cite{Nadkarni1998}), but write the argument out in detail because of the $b$-dependence of the action of $\tilde{T}^1_{R,x}$ on $\mathbb{R}^\mathbb{Z} \times \mathbb{R}^\mathbb{Z}$. Ergodicity of the $\{\tilde{T}^1_{R,x}\}_{x\in \mathbb{Z}}$ is equivalent to the statement that
    \begin{equation} \label{eq:equiv_to_ergodic}
        \lim_{L \rightarrow \infty} \frac{1}{2L+1} \sum_{x = -L}^{L} \mathbb{P}_{R,2}\left( \tilde{T}^1_{R,x}(A) \cap B \right) = \mathbb{P}_{R,2}(A) \mathbb{P}_{R,2}(B) 
    \end{equation}
    for all $A, B \in \mathcal{A}_2$ \cite{Nadkarni1998}. Since the product $\sigma$-algebra is generated by the algebra of finite disjoint unions of rectangles (Section 33 Theorem E of \cite{halmos2013measure}), by an approximation argument as above (Section 13 Theorem D of \cite{halmos2013measure}), it suffices to prove \eqref{eq:equiv_to_ergodic} for all rectangles $A = \alpha \times \beta$, $B = \gamma \times \delta$, for $\alpha, \gamma \in \mathcal{A}$ and $\beta, \delta \in \mathcal{B}$. By further approximation we can assume that $\beta, \delta \in \mathcal{B}$ are finite-dimensional subsets in the sense of the previous section, i.e., all but finitely-many components are equal to $\mathbb{R}$. The $\tilde{T}^1_{R,x}$ act on such sets by
    \begin{equation}
        \tilde{T}^1_{R,x}(\alpha \times \beta) = \tilde{T}^1_{R,x,1}(\alpha) \times \tilde{T}^1_{R,x,2}(\beta;\alpha),
    \end{equation}
    where $\tilde{T}^1_{R,x,1}(b) = b + x - (1-\theta)[b+x]_{1-\theta}$ and $\tilde{T}^1_{R,x,2}(\omega_1,\omega_2;b) = \left( \left(\omega_{1(j-x)}\right)_{j \in \mathbb{Z}} , \left(\omega_{2(j-[b+x]_{1-\theta})}\right)_{j \in \mathbb{Z}} \right)$. By definition, we then have 
    \begin{equation}
        \begin{split} 
            \frac{1}{2L+1} \sum_{x = -L}^{L} \mathbb{P}_{R,2}\left( \tilde{T}^1_{R,x}(A) \cap B \right) &= \frac{1}{2L+1} \sum_{x = -L}^{L} \mathbb{P}_{R,2}\left( \tilde{T}^1_{R,x}(\alpha \times \beta) \cap (\gamma \times \delta) \right)   \\
            &= \frac{1}{2L+1} \sum_{x = -L}^{L} \mathbb{P}_{R,2}\left( \left( \tilde{T}^1_{R,x,1}(\alpha) \cap \gamma \right) \times \left( \tilde{T}^1_{R,x,2}(\beta;\alpha) \cap \delta \right) \right).
        \end{split}
    \end{equation}
By definition of the product measure we have
    \begin{equation}
        = \frac{1}{2L+1} \sum_{x = -L}^{L} \mathbb{P}_{2}\left( \tilde{T}^1_{R,x,1}(\alpha) \cap \gamma \right) \cdot \mathbb{P}\left( \tilde{T}^1_{R,x,2}(\beta;\alpha) \cap \delta \right),
    \end{equation}
which we write as
    \begin{equation}
        = \frac{1}{2L+1} \sum_{x = -L}^{L} \mathbb{P}_{2}\left( \tilde{T}^1_{R,x,1}(\alpha) \cap \gamma \right) \cdot \left( \mathbb{P}\left( \tilde{T}^1_{R,x,2}(\beta;\alpha) \cap \delta \right) - \mathbb{P}(\beta) \mathbb{P}(\delta) \right) + \mathbb{P}_{2}\left( \tilde{T}^1_{R,x,1}(\alpha) \cap \gamma \right) \cdot \mathbb{P}(\beta) \mathbb{P}(\delta).
    \end{equation}
    But now notice that, just as in the previous section, since $\left| [b+x]_{1-\theta} \right|$ can be bounded below by $|x| - |1 - \theta|$, we have
    \begin{equation}
        \mathbb{P}\left( \tilde{T}^1_{R,x,2}(\beta;\alpha) \cap \delta \right) = \mathbb{P}(\beta) \mathbb{P}(\delta) \text{ for $|x|$ sufficiently large,}
    \end{equation}
and hence the first term can be made arbitrarily small for large $L$. But now notice that the second term satisfies
    \begin{equation}
        \lim_{L \rightarrow \infty} \frac{1}{2L+1} \sum_{x = -L}^{L} \mathbb{P}_{2}\left( \tilde{T}^1_{R,x,1}(\alpha) \cap \gamma \right) \cdot \mathbb{P}(\beta) \mathbb{P}(\delta) = \mathbb{P}_2(\alpha)\mathbb{P}_2(\gamma)\mathbb{P}(\beta)\mathbb{P}(\delta) = \mathbb{P}_{R,2}(A) \mathbb{P}_{R,2}(B)
    \end{equation}
    where the first equality follows immediately from ergodicity of the transformations $\tilde{T}^1_{R,x,1}$ and the second from the definition of the product measure.

\end{proof}

\section{Kernel polynomial method} \label{sec:KPM}

To visualize the density of states across the spectrum of $H$, we evaluate the density of states measure with $g$ given by convenient smoothings of the delta distribution shifted along a grid of energies $E$. Recall that the Chebyshev polynomials are defined by the recurrence relation
\begin{align*}
    C_{i+2}(x) = 2x&C_{i+1}(x) -C_i(x) \\
    C_0 = 1 ~&~ C_1 = x
\end{align*}
and satisfy the orthogonality condition
\begin{align*}
    \int_{-1}^1 \frac{1}{\pi \sqrt{1-x^2}}T_n(x)T_m(x)dx = \frac{1+\delta_{0n}}{2}\delta_{nm}.
\end{align*}
Using these polynomials, we can introduce the functions
\begin{equation}
    G(x,E;N) := \frac{1}{\pi \sqrt{1-E^2}}\sum_{i\leq N} C_i(x) C_i(E)~~~~~~ (x,E)\in (-1,1)
\end{equation}
which approximate the delta distribution centered at $E$ for $(x,E) \in (-1,1)$ where $N$ is the highest degree Chebyshev polynomial used in the series. To approximate the delta distribution on the interval $\left(-\frac{1}{\eta},\frac{1}{\eta}\right)$, we introduce the scaled functions
\begin{align*}
    \tilde{G}(x,E;\eta,N) := \eta G(\eta x,\eta E;\eta,N) = \frac{\eta}{\pi \sqrt{1-(\eta E)^2}}\sum_{i\leq N} C_i(\eta x)C_i(\eta E) ~~~~~~ (x,E)\in \left(-\frac{1}{\eta},\frac{1}{\eta}\right).
\end{align*}

We obtain a na\"ive numerical approximation of the density of states by evaluating \eqref{eq:red_this}, \eqref{eq:pre_result_convergence_incommensurate_chain}, and \eqref{eq:pre-result_convergence_dos_random_iccm} with the function $g$ given by $\tilde{G}$, with a suitable truncation of $H(b)$ substituted for $x$, calculating this scalar product over a range of $E$ values, and with $\eta$ chosen so that the interval $\left(-\frac{1}{\eta},\frac{1}{\eta}\right)$ always contains the spectrum.
However, this approximation is very susceptible to Gibbs oscillations. These oscillations heavily impact the calculation of the density of states by making the functions $\tilde{G}$ negative on some intervals. This can be fixed by introducing the Jackson coefficients
\begin{equation*} \label{JacksonCoeffs}
    J_i^N := \hspace{.2cm}\frac{(N-i+1)\cos\left(\frac{\pi i}{N+1}\right)+\sin\left(\frac{\pi i}{N+1}\right) \cot\left(\frac{\pi}{N+1}\right)}{N+1},
\end{equation*}
and replacing the functions $\tilde{G}$ by
\begin{align*}
    \tilde{G}_J(x,E;\eta,N) := \frac{\eta}{\pi \sqrt{1-(\eta E)^2}}\sum_{i\leq N} J^N_i C_i(\eta x)C_i(\eta E).
\end{align*}
These coefficients transform the Chebyshev expansion to make it positive everywhere and remove the oscillatory behavior. See Figure \ref{fig:jackson} for a plot showing the difference that the coefficients make. 

The computation of density of states is made more efficient by observing that
\begin{align*}
    \langle \delta_{x}, \tilde{G}_J (E,H;\eta)\delta_{x}\rangle = \frac{\eta}{\pi \sqrt{1-(\eta E)^2}}\sum_{i\leq N} J^N_i C_i(\eta E)\langle \delta_{x},C_i(\eta H)\delta_{x}\rangle.
\end{align*}
This means that the diagonal matrix elements of the Hamiltonian can be re-used when computing the density of states at different values of $E$. This leads to large savings in numerical calculations.

We return to the Chebyshev recursion formula to simplify calculating the large degree matrix polynomials we need as
\begin{align*}
    C_{i+2}(x) = 2x&C_{i+1}(x) -C_i(x) \implies C_{i+2}(H) = 2HC_{i+1}(H) -C_i(H) \\
    C_0 = 1 ~&~ C_1 = x ~~~~~~~~~~~~~~~~~~~~~~~~~~C_0 = I ~~ C_1 = H.
\end{align*}
We can use this to generate arbitrarily high Chebyshev polynomials of our Hamiltonians quickly. Since we only need a single element from the Hamiltonian, much of the calculation can be simplified further by manipulating the seeds of the polynomial recursion as follows
\begin{align*}
    D_{i+2}(H) = 2H&D_{i+1}(H) -D_i(H) \\
    D_0 = v_0 ~&~ D_1 = Hv_0
\end{align*}
where $v_0$ is a vector of zeros that has a 1 in the location of the desired element. This greatly reduces the complexity once we make a truncation
\begin{align*}
    \langle \delta_{x}, C_i(\eta H)\delta_{x}\rangle = v_0 \cdot D_i(\eta H).
\end{align*}

\begin{figure}
    \centering
    \includegraphics[width = 3.15in]{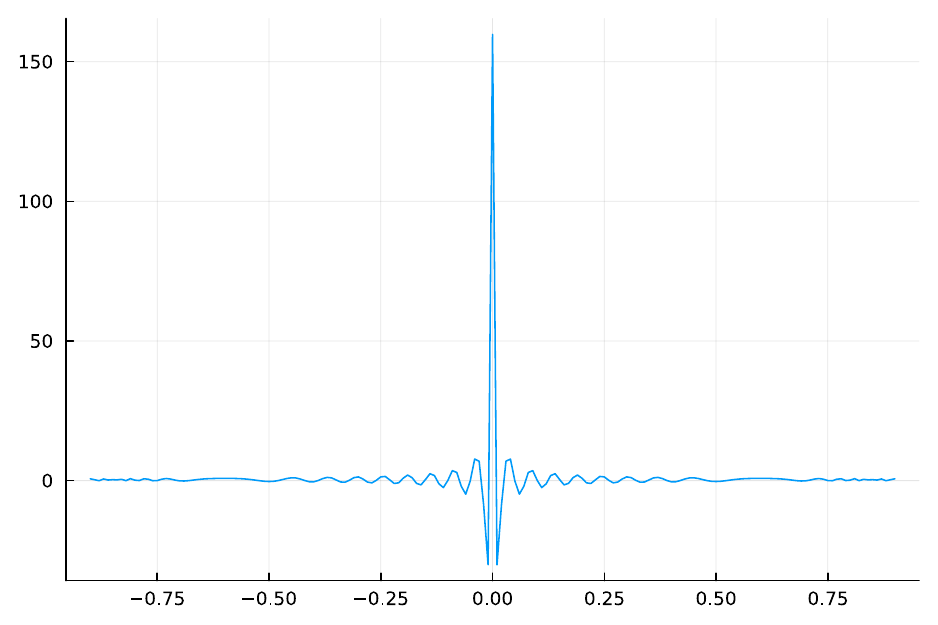}
    \includegraphics[width = 3.15in]{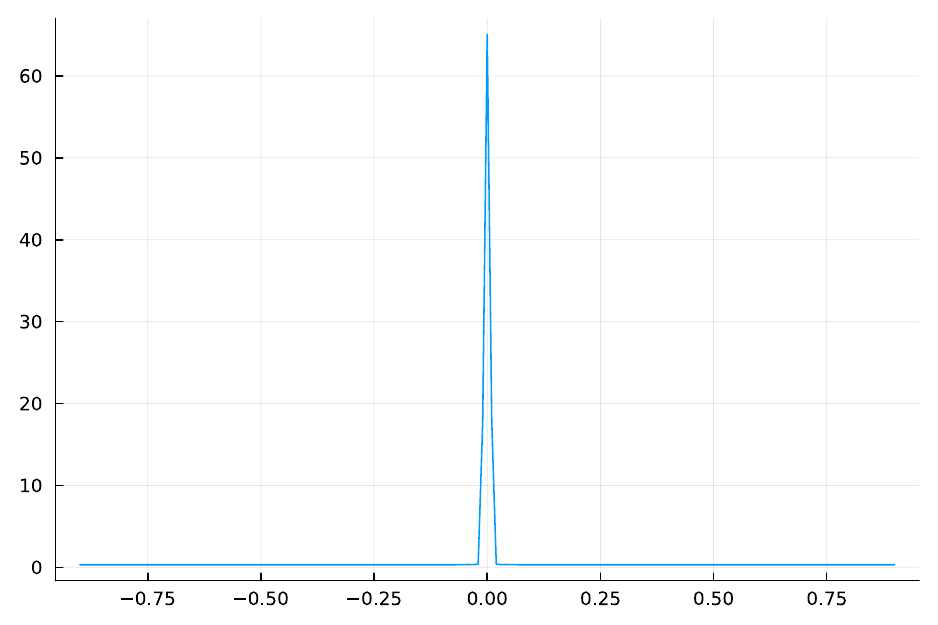}
    \caption{Approximations of $\delta(x)$ up to order 500 Chebyshev polynomials. Left: without Jackson Coefficients, Right: with Jackson Coefficients}
    \label{fig:jackson}
\end{figure}

\end{document}